\begin{document}

\IEEEoverridecommandlockouts
\title{On the Secrecy Capacity of a MIMO Gaussian Wiretap Channel with a Cooperative Jammer}
\author{ %Author 1, Author 2, Author 3, Author 4
Lingxiang Li, Zhi Chen, Jun Fang, ~\IEEEmembership{Member,~IEEE},
and Athina P. Petropulu, ~\IEEEmembership{Fellow,~IEEE}
\thanks{Lingxiang Li, Zhi Chen, and Jun Fang are with the National Key Laboratory of Science and Technology on Communications,
University of Electronic Science and Technology of China, Chengdu 610054, China (e-mails:lingxiang\_li\_uestc@hotmail.com; \{chenzhi, JunFang\}@uestc.edu.cn)}
\thanks{A. P. Petropulu is with the Department of Electrical and Computer Engineering, Rutgers--The State University of New Jersey, New Brunswick, NJ 08854 USA (e-mail: athinap@rci.rutgers.edu).}
\thanks{This work was supported in part by the Important National Science and
Technology Specific Projects of China under Grant 2014ZX03004003, and by
the Sichuan Province Project under Grant 2012FZ0119.}
}

%\title{On the Secrecy Capacity of a MIMO Gaussian Wiretap Channel with a Cooperative Jammer}
%\author{ %Author 1, Author 2, Author 3, Author 4
%Lingxiang Li, Zhi Chen, Jun Fang, ~\IEEEmembership{Member,~IEEE},
%and \\ Athina P. Petropulu, ~\IEEEmembership{Fellow,~IEEE}
%\thanks{Lingxiang Li, Zhi Chen, and Jun Fang are with the National Key Laboratory of Science and Technology on Communications,
%University of Electronic Science and Technology of China, Chengdu 610054, China (e-mails:lingxiang\_li\_uestc@hotmail.com; \{chenzhi, JunFang\}@uestc.edu.cn)}
%\thanks{A. P. Petropulu is with the Department of Electrical and Computer Engineering, Rutgers--The State University of New Jersey, New Brunswick, NJ 08854 USA (e-mail: athinap@rci.rutgers.edu).}
%\thanks{This work was supported in part by the Important National Science and
%Technology Specific Projects of China under Grant 2014ZX03004003, and by
%the Sichuan Province Project under Grant 2012FZ0119.}
%}

\maketitle

%Liu Li is with the Mathematics and Statistics College, Chongqing University of China, Chongqing 401331, China.
%steven.lee.yeah@gmail.com;

\begin{abstract}
We study the secrecy capacity of a helper-assisted Gaussian wiretap channel
with a source, a legitimate receiver, an eavesdropper and an
external helper, where each terminal is equipped with multiple
antennas. Determining the secrecy capacity in this scenario generally
requires solving a nonconvex secrecy rate maximization (SRM) problem.
To deal with this issue, we first reformulate the original SRM problem
into a sequence of convex subproblems.
For the special case of single-antenna
legitimate receiver, we obtain the secrecy capacity
via a combination of convex optimization and
one-dimensional search, while for the general case of 
multi-antenna legitimate receiver, we propose an iterative
solution. To gain more insight into how the secrecy
capacity of a helper-assisted Gaussian wiretap channel behaves,
we examine the achievable secure degrees of freedom (s.d.o.f.)
and obtain the maximal achievable s.d.o.f.
in closed-form. We also derive
a closed-form solution to the original SRM problem which achieves the maximal s.d.o.f..
Numerical results are presented to illustrate the efficacy of the proposed schemes.

\end{abstract}

%\begin{keywords}
%Physical-layer security, Cooperative communications, Secrecy capacity.
%\end{keywords}

%EDICS: DSP-RECO Signal reconstruction
%DSP-SAMP Sampling, extrapolation, and interpolation

\section{Introduction}
The area of physical (PHY) layer security has been pioneered by Wyner \cite{Wyner75},
who introduced the wiretap channel and
quantified security with the maximal achievable secrecy rate (also known as the secrecy capacity) at which
the legitimate receiver can correctly decode the source message,
while the eavesdropper can retrieve almost no information.
Results in \cite{Leung78} further show that for the
classical source-destination-eavesdropper Gaussian wiretap channel, the secrecy capacity is zero
when the quality of the legitimate channel
is worse than that of the eavesdropper's channel.
One way to achieve non-zero secrecy rate in the latter case is to introduce
external helpers which act as cooperative jammers \cite{Tang08}.
%To deal with this issue, the cooperative jamming technique with an added external
%helpers transmitting jamming signals is adopted in systems to improve security \cite{Tang08}.
%The rationale behind this fact is that, external helpers can
%transmit jamming signals to
By transmitting jamming signals the external helpers
degrade the eavesdropper's channel without hurting the legitimate channel,
thus allowing secret communication even when the eavesdropper's channel
has a much better quality.
Works along these lines include
\cite{Swindlehurst11,Lingxiang14,Han11,Ali11} which consider one external helper
and \cite{Zheng11,Jiangyuan11,LunDong10,Shuangyu13,Kalogerias13,Wang09,Hoon14} which consider
the case of multiple external helpers. More complex relaying scenarios are considered
in \cite{Goeckel11,jing11,HuiMing13,Fengchao14}
where the jamming signal is sent in the relaying phase, or in both the broadcasting phase and the relaying phase.

Although cooperative jamming approaches improve the secrecy rate,
their advantage comes from optimally designed input covariance matrices, which are difficult to obtain
due to the nonlinear fractional nature of the problem.
To address this issue, for the single-antenna eavesdropper case,
\cite{LunDong10,Shuangyu13,Kalogerias13} propose a suboptimal but cost efficient null-space jamming scheme
that spreads the jamming signal within the null-space of the legitimate receiver's channel,
while \cite{Han11,Ali11,Zheng11,Jiangyuan11,Fengchao14} design algorithms to get the optimal solution
using a combination of convex optimization and one-dimensional search.
For the multi-antenna eavesdropper case,
\cite{Wang09,Hoon14} design the jamming signals so that they align into a
pre-specified jamming subspace at the legitimate receiver, while
spanning the whole received signal space at the eavesdropper. This approach allows the
legitimate receiver to completely remove the interference by projecting the
received signal to the secrecy subspace, while confounding the eavesdropper.
Still for the multi-antenna eavesdropper  case, the work of \cite{Swindlehurst11}
provides a closed-form expression for the structure
of the jamming signal covariance matrix that guarantees secrecy
rate larger than the secrecy capacity of the wiretap
channel with no jamming signals. The results of \cite{Swindlehurst11}
are obtained under the power covariance constraint.

In this paper, we consider a multi-input multi-output (MIMO)
Gaussian wiretap channel with one external multi-antenna helper as in \cite{Swindlehurst11}.
Different from \cite{Swindlehurst11}, we investigate the secrecy
rate maximization (SRM) problem under an average power constraint.
To the best of the authors' knowledge, determining the exact
secrecy capacity of a helper-assisted MIMO Gaussian wiretap channel
has not been previously addressed.
We first address the problem for the special case of single-antenna
legitimate receiver. By decomposing the original nonconvex SRM
problem into a sequence of convex subproblems, we
obtain the optimal solution to the original SRM problem
via a combination of convex optimization and
one-dimensional search. For the general case of
multi-antenna legitimate receiver, we propose an iterative algorithm
to solve the original SRM problem via
employing the Gauss-Seidel approach, which successively optimizes each
variable while the other variables are kept fixed. Specifically, each subproblem is convex and
admits an optimal solution. Though the proposed iterative algorithm provides no guarantee of
finding the global optimal solution, it constitutes an efficient way
for attaining a meaningful achievable secrecy rate.

In order to gain more insight into how the secrecy capacity of
a helper-assisted MIMO Gaussian wiretap channel behaves, we examine the
rate at which the secrecy capacity scales with ${\rm{log}}(P)$,
i.e., the maximal achievable secure degrees of freedom (s.d.o.f.) \cite{Liang09}.
To this end, we first introduce an alternative optimization problem, i.e., maximizing the dimension
of the subspace spanned by the message signal
received at the legitimate receiver, under the constraints that the message and
jamming signals lie in different subspaces at the legitimate receiver,
but are aligned into the same subspace at the eavesdropper.
We then give a critical lemma, proving that the maximal achievable objective value of the
newly introduced optimization problem equals the maximal achievable s.d.o.f..
Consequently, the original s.d.o.f. maximization reduces
to the newly introduced optimization problem.
Subsequently, we solve analytically the newly introduced optimization problem,
thus obtaining the maximal achievable s.d.o.f. of
the helper-assisted MIMO Gaussian wiretap channel in closed-form.
Further, we derive an analytical solution to the original SRM problem,
which achieves the maximal s.d.o.f..
%Last but not least, we provide a heuristic method which gives a closed-form feasible solution
%to the newly introduced optimization problem, and prove that
%such closed-form feasible solution is also the optimal one.
%Combining what has been discussed above, we get to know that the derived
%closed-form optimal solution actually acts as a s.d.o.f.-optimal solution to the original
%SRM problem. Based on this s.d.o.f.-optimal solution, we further obtain the maximal achievable secure degrees of
%freedom of the MIMO Gaussian wiretap channel in closed-form expressions.
Our analytical results prove that for the special case of single-antenna legitimate receiver,
a s.d.o.f. of 1 can be achieved if and only if
$N_e<N_a+N_j-1$; for the case of multi-antenna legitimate receiver,
the maximal achievable s.d.o.f. is zero if and
only if $N_e \ge N_a+N_j$.

We should note that the s.d.o.f. for the helper-assisted Gaussian wiretap channel has also
been investigated in \cite{Hoon14,Nafea11,Xie14,Nafea13}. Different from our work,
the work of \cite{Hoon14} studies a scenario in which a large number of helpers is available, and exploits
multiuser diversity via opportunistic helper selection to enhance security.
The works of \cite{Nafea11,Xie14} consider a scenario in which each terminal is equipped with one antenna,
while the work of \cite{Nafea13} considers the special scenario in which the source, the legitimate receiver and
the eavesdropper are equipped with the same number of antennas. Further, the works of \cite{Nafea11,Xie14,Nafea13}
examine the s.d.o.f. based on real interference alignment, while our work is based
on spatial interference alignment.

The rest of this paper is organized as follows. In
Section II, we describe the system model for the MIMO
Gaussian wiretap channel with one external multi-antenna helper,
and formulate the secrecy rate maximization problem. In Section III, we consider
the special case of single-antenna legitimate receiver. We investigate the
secrecy rate maximization problem, and examine the conditions under which a secure degrees of freedom
equal to 1 can be achieved. In Section IV, we consider the
general case of multi-antenna legitimate receiver, investigate the
secrecy rate maximization problem, and examine the maximal achievable secure degrees of freedom.
Numerical results are provided in Section V and conclusions are drawn in Section VI.

\textit{Notation:}
${\bf{A}}^H$, $\rm{tr}\{\bf{A}\}$ and
$\rm{rank}\{\bf{A}\}$ stand for the hermitian transpose, trace and
rank of the matrix $\bf{A}$, respectively; ${\bf A}(:,j) $
indicates the $j$-th column of $\bf A$ while
and ${\bf A}(:,i:j) $ denotes the columns from $i$ to $j$ of $\bf A$.
${\rm {span}}({\bf A})$ and ${\rm {span}}({\bf A})^\perp$ are the subspace spanned by
the columns of $\bf A$ and its orthogonal complement, respectively;
${\rm {null}}({\bf A})$ denotes the null space of ${\bf A}$;
${\rm {span}}({\bf A})/{\rm {span}}({\bf B})\triangleq\{{{\bf{x}}|{\bf{x}}\in{\rm {span}}({\bf A}), {\bf{x}} \notin {\rm {span}}({\bf B})}\}$.
${\bf{A}} \succeq {\bf{B}}$ denotes that ${\bf{A}}-{\bf{B}}$ is a hermitian positive semidefinite matrix.
$\mathbb{C}^{N \times M}$ indicates a ${N \times M}$ complex matrix set. $i \in \mathbb{Z}$ denotes
that $i$ is a positive integer.
$\bf{I}$ represents an identity matrix with appropriate size. Besides,
$a^+ \triangleq \max(a,0)$; $|a|$ is the magnitude of $a$; $x\sim\mathcal{CN}(0,\Sigma)$ means $x$
is a random variable following a complex circular Gaussian
distribution with mean zero and covariance $\Sigma$.

\section{System Model and Problem Statement}
We consider the MIMO Gaussian wiretap channel with a cooperative jammer (see Fig.1) where the source, the
legitimate receiver, the eavesdropper and the external helper are equipped with $N_a$, $N_b$,
$N_e$ and $N_j$ antennas, respectively. The source wishes to
send its message, $\bf{x}$, to the legitimate receiver, without being eavesdropped
by the eavesdropper.
Towards that objective, the source is aided by a
cooperative terminal, which
simultaneously transmits jamming signal, ${\bf{z}}$, to confuse the eavesdropper.
The signals received at the legitimate receiver and the eavesdropper
can be respectively expressed as
\begin{subequations}
\begin{align}
{{\bf y}_d} = {\bf{H}}_{1}{\bf Vx} + {{\bf{G}}_{2}}{{\bf{Wz}}} + {{\bf{n}}_d} \\
{{\bf{y}}_e} = {{\bf{G}}_{1}}{\bf Vx} + {{\bf{H}}_{2}}{{\bf{Wz}}} + {{\bf{n}}_e}\textrm{,}
\end{align}
\end{subequations}
where $\bf{V}$ and $\bf{W}$ are the precoding matrices at the source and the helper, respectively;
${{\bf{n}}_d} \sim \mathcal{CN}(\bf{0},\bf{I}) $ and ${{\bf{n}}_e} \sim \mathcal{CN}(\bf{0},\bf{I})$
represent noise at the legitimate receiver and the eavesdropper, respectively;
${\bf{G}}_{2}\in\mathbb{C}^{N_b \times N_j}$
and ${\bf{H}}_{2}\in\mathbb{C}^{N_e \times N_j}$ represent the
helper to legitimate receiver and the helper to eavesdropper channel matrices, respectively;
${\bf{H}}_{1}\in\mathbb{C}^{N_b \times N_a}$ and ${\bf{G}}_{1}\in\mathbb{C}^{N_e \times N_a}$ denote the
channel matrix from the source to the legitimate receiver and the source to the eavesdropper, respectively.
All channels are assumed to be flat fading. We assume that
global channel state information (CSI) is available, including the CSI for the eavesdropper.
This is possible in situations in which the eavesdropper is normally an active member
of the network, communicating nonconfidential information with the other parties in other time slots \cite{Swindlehurst11}.
A minimum-Mean-Square-Error (MMSE) receiver is considered at the legitimate receiver and the eavesdropper.
The rate at the legitimate receiver and the eavesdropper can be respectively expressed as
\begin{subequations}
\begin{align}
& R_d = {\rm {log}}|{\bf I}+({\bf I}+{\bf G}_2{\bf Q}_j{\bf G}_2^H)^{-1}{\bf H}_1{\bf Q}_a{\bf H}_1^H| \\
& R_e = {\rm {log}}|{\bf I}+({\bf I}+{\bf H}_2{\bf Q}_j{\bf H}_2^H)^{-1}{\bf G}_1{\bf Q}_a{\bf G}_1^H|\textrm{,}
\end{align}
\end{subequations}
where ${\bf{Q}}_a\triangleq{\bf {VV}}^H$ and ${\bf{Q}}_j\triangleq{\bf {WW}}^H$ are the transmit
covariance matrices at the source and the helper, respectively.
In the paper, we focus on the SRM problem \cite{OggierBabak11}, i.e.,
\footnote{For a given point $ \{{\bf Q}_a,{{\bf{Q}}_j}\}$, the achieved secrecy rate is
$\max(R_d-R_e,0)$.
For ease of exposition, the trivial case with zero achievable secrecy rate is omitted.
}
\begin{align}
C_s \triangleq \mathop {\max }\limits_{\{ {\bf Q}_a\succeq {\bf{0}},{{\bf{Q}}_j}\succeq {\bf{0}}\} }  &
R_d-R_e
\nonumber\\
\textrm{s.t.} \quad &  {\rm{tr}} \{ {\bf{Q}}_a+ {\bf{Q}}_j \} \le P\textrm{,}
\end{align}
where $P$ is a given total transmit power budget and
$C_s$ denotes the maximal achievable secrecy rate, also known as the secrecy capacity.

Generally, the optimization problem of (3) is nonconvex.
It is challenging and still an open problem
to determine the exact secrecy capacity.
In this paper,
%which makes it difficult to derive a closed-form expression on $C_s$. Instead,
we propose to solve the problem of (3) by reformulating it into a sequence of
convex problems.
Also, we study the rate at which the
secrecy capacity scales with ${\rm{log}}(P)$, i.e.,
the maximal achievable s.d.o.f. \cite{Liang09}, which equals
\begin{align}
{s.d.o.f} \triangleq \mathop{\lim }\limits_{ P \to \infty }
\dfrac {C_s }{{\rm log} \ P}.
\end{align}
We compute $s.d.o.f$ analytically
and determine its connection to system parameters, i.e., the number of antennas at each terminal.

In the following sections, we will begin with the simple case where
the legitimate receiver is equipped with one antenna. Then, a more complicated scenario, in which
each terminal is equipped with multiple antennas will be investigated.

\begin{figure}[!t]
\centering
\includegraphics[width=3in]{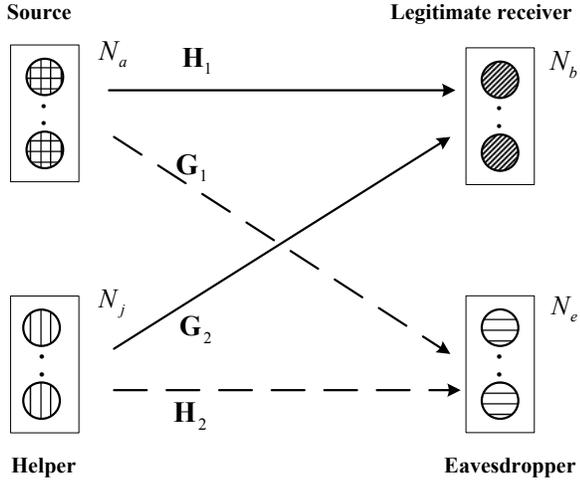}
 %where an .eps filename suffix will be assumed under latex,
% and a .pdf suffix will be assumed for pdflatex; or what has been declared via
\DeclareGraphicsExtensions. \caption{MIMO wiretap channel with an external helper}
\vspace* {-12pt}
\end{figure}

\section{Helper-Assisted MISOME Wiretap Channel}
\newtheorem{proposition}{Proposition}
In this section, we consider the helper-assisted
multi-input single-output multi-antenna-eavesdropper (MISOME) wiretap channel where the legitimate receiver is
equipped with a single antenna ($N_b=1$). In such case, the legitimate receiver can
receive at most one data stream. Thus, the source transmits one data stream $x$.
Let $\bf{v}$ denote the precoding vector at the source.
The signals received at the legitimate receiver and the eavesdropper
can be respectively expressed as
\begin{align}
{y_d} = {\bf{h}}_{1}{\bf v}x + {{\bf{g}}_{2}}{{\bf{Wz}}} + {n_d} \\
{{\bf{y}}_e} = {{\bf{G}}_{1}}{\bf v}x + {{\bf{H}}_{2}}{{\bf{Wz}}} + {{\bf{n}}_e}\textrm{,}
\end{align}
where ${\bf{h}}_{1}\in\mathbb{C}^{1 \times N_a}$ denotes the channel vector from
the source to the legitimate receiver, and ${\bf{g}}_{2}\in\mathbb{C}^{1 \times N_j}$
represents the channel vector from the helper to the legitimate receiver.
The rate at the legitimate receiver and the eavesdropper can be simplified as,
\begin{subequations}
\begin{align}
& R_d = {\rm {log}}(1+(1+ {{\bf{g}}_2}{\bf{Q}}_j{\bf{ g}}_2^H )^{-1}{{{{\bf{h}}_1}{\bf {vv}}^H{\bf{h}}_1^H}}) \\
& R_e={\rm {log}}(1+{\bf v}^H{\bf{G}}_1^H{{({\bf{I}} + {{\bf{H}}_2}{\bf{Q}}_j{\bf{ H}}_2^H)}^{ - 1}}{{\bf{G}}_1}{\bf v}).
\end{align}
\end{subequations}
Correspondingly, the secrecy capacity equals
\begin{align}
C_s = \mathop {\max }\limits_{\{ {\bf {v}},{{\bf{Q}}_j}\succeq {\bf{0}}\} }  &
R_d-R_e
\nonumber\\
\textrm{s.t.} \quad &  {\rm{tr}} \{ {\bf {vv}}^H+ {\bf{Q}}_j \} \le P.
\end{align}

Due to the presence of the multi-antenna eavesdropper, the SRM problem
becomes more complex as compared with the problem considered in
\cite{LunDong10,Zheng11,Han11,Ali11}. To cope with this issue, we resort to
%Due to the multi-eavesdropper involved, we have to handle
%a more intricate SRM problem than that in \cite{LunDong10,Zheng11,Han11,Ali11}.
%To cope with it, we resort to
theorems on partial ordering of hermitian matrix \cite{Horn85}.
Based on these theorems, we transform the original matrix inverse constraint into
a convex linear matrix inequality (LMI) constraint, which, together with the semidefinite relaxation (SDR) technique,
enables us to recast the original nonconvex optimization problem into a sequence of semidefinite programmes (SDPs).
Further, we prove that the optimal
solutions to the relaxed optimization problem are also optimal
solutions to the original SRM problem. Consequently, we obtain
the optimal solution to the original SRM problem
with a combination of convex optimization and one-dimensional search.
On the other hand, to gain more insight into how the secrecy capacity of the
helper-assisted MISOME Gaussian wiretap channel behaves, we examine
%the maximal achievable s.d.o.f. and reveal its connection with system parameters.
the conditions under which a s.d.o.f. equal to 1 can be achieved.
To this end, we first introduce an alternative optimization problem which
keeps the message signal and the jamming signal into different subspaces at the legitimate receiver,
but aligns them into the same subspace at the eavesdropper.
We then give two key lemmas. Lemma 1 proves that, a s.d.o.f. equal to 1 can be achieved
if and only if the newly introduced optimization problem returns a nonempty set. Lemma 2
gives the conditions under which the newly introduced optimization problem returns a nonempty set.
Combining the two lemmas, we finally show that a s.d.o.f. equal to 1 can be achieved if and only if
$N_e<N_a+N_j-1$.

%Subsequently, as a by-product, we give a suboptimal but closed-form solution to the original SRM problem.

%In the rest of this section, we first investigate the
%conditions under which 1 s.d.o.f. can be achieved. Subsequently,
%for the case where 1 s.d.o.f. can be achieved, we determine the beamforming matrix to
%achieve 1 s.d.o.f. and resolve the remaining power allocation problem,
%thus giving a suboptimal but simple solution to the original SRM problem in (8).
%We then recast the nonconvex secrecy rate maximization (SRM) problem in (8) into a sequence of
%convex optimization problems. In doing this way, we are able to obtain the secrecy capacity
%using a combination of convex optimization and a one-dimensional search.

\subsection{Secrecy rate maximization}
To solve the SRM problem in (8), the Two-Layer idea of \cite{Zheng11} is adopted.
The key insight is to recast the original optimization problem
in (8) as a two-level optimization problem.
The inner-level part is dealt with the SDR technique,
and the outer-level part is handled by one-dimensional search.
Specifically, the outer-level part is
%we rewrite the optimization problem in (6) equivalently as
\begin{equation}
\mathop {\max }\limits_{\tau  \in [{\tau_{lb}},{\tau _{ub}}]} {\rm{log}}(1 + g(\tau ))-{\rm{log}}(1+ \tau)\textrm{,}
\end{equation}
where $g(\tau )$ is obtained by solving the following inner-level part optimization problem for a fixed $\tau$:
\begin{subequations}
\begin{align}
g(\tau )={\mathop {\max }\limits_{\{ {\bf{v}},{\bf{Q}}_j\succeq {\bf{0}}\}  } }  &
{{{{\bf{h}}_1}{\bf {vv}}^H{\bf{h}}_1^H}}/({{1 + {{\bf{g}}_2}{\bf{Q}}_j{\bf{ g}}_2^H}})
\\
{\rm{s.t.}} \quad &
{\bf v}^H{\bf{G}}_1^H{{({\bf{I}} + {{\bf{H}}_2}{\bf Q }_j{\bf{ H}}_2^H)}^{ - 1}}{{\bf{G}}_1}{\bf v} \le  \tau
\\
\quad& {\rm{tr}} \{ {\bf {vv}}^H + {\bf{Q}}_j\} \le P.
\end{align}
\end{subequations}
By performing one-dimensional search on $\tau$, the optimal ${\tau}^\star$ maximizing the objective
function in (9) can be found. Correspondingly, the optimal solution $\{{\bf{v}}^\star,{{\bf{Q}}_j^\star}\}$
to the original optimization problem of (8) can be obtained.

In (9), ${\tau_{lb}}$ and ${\tau _{ub}}$ denote the lower and upper bound on ${\gamma}_e$, respectively.
Firstly, it is obvious that ${\gamma}_e$ is no less than 0. Thus, we have ${\tau_{lb}}=0$.
Secondly, according to the security requirement, ${\gamma}_e$ should be no more than ${\gamma}_d$. Further,
${\gamma}_d$ is upper bounded by the maximal received signal-to-noise ratio (SNR)
value of $P|{\bf{h}}_1|^2$ at the legitimate receiver. Therefore,
${\tau _{ub}} = P|{\bf{h}}_1|^2$.

So far, ${\tau_{lb}}$ and ${\tau _{ub}}$ have been determined. In the following,
we focus on solving the optimization problem of
(10), which is still nonconvex. To solve it, we resort to the SDR technique of \cite{ZQLuo10}.
On denoting ${{\bf{Q}}_a} = {\bf{v}}{{\bf{v}}^H}$ and dropping the rank-one constraint,
the optimization problem of (10) can be rewritten as
\begin{subequations}
\begin{align}
f({\tau})=& {\mathop {\max }\limits_{\{ {\bf{Q}}_a\succeq {\bf{0}},{\bf{Q}}_j\succeq {\bf{0}}\}  } }
{{{{\bf{h}}_1}{\bf{Q}}_a{\bf{h}}_1^H}}/({{1 + {{\bf{g}}_2}{\bf{Q}}_j{\bf{ g}}_2^H}}) \\
& \quad \quad {\rm{s.t.}} \quad
{{\bf{G}}_1}{\bf Q}_a{\bf{G}}_1^H \preceq \tau ({\bf{I}} + {{\bf{H}}_2}{\bf Q }_j{\bf{ H}}_2^H) \\
&\quad \quad \ \quad \quad {\rm{tr}} \{ {\bf{Q}}_a + {\bf{Q}}_j\} \le P\textrm{,}
\end{align}
\end{subequations}
where the replacement of the constraint (10b) with (11b) can be proven using basic theorems on partial ordering \cite{Horn85} as follows:
\begin{align}
{\textrm{(10b)}} \Leftrightarrow & {\lambda}_{\max}({{({\bf{I}} + {{\bf{H}}_2}{\bf Q }_j{\bf{ H}}_2^H)}^{ - 1}}{{\bf{G}}_1}{\bf v}{\bf v}^H{\bf{G}}_1^H )\le  \tau
\nonumber \\
\Leftrightarrow & ({\bf{I}} + {{\bf{H}}_2}{\bf Q }_j{\bf{ H}}_2^H)^{ - 1}{{\bf{G}}_1}{\bf v}{\bf v}^H{\bf{G}}_1^H \preceq  \tau {\bf {I}}
\nonumber \\
\Leftrightarrow & {{\bf{G}}_1}{\bf v}{\bf v}^H{\bf{G}}_1^H \preceq \tau ({\bf{I}} + {{\bf{H}}_2}{\bf Q }_j{\bf{ H}}_2^H) \Leftrightarrow {\textrm{(11b)}}. \nonumber
\end{align}
In the above, ${\lambda}_{\max}({\bf A})$ denotes the maximum eigenvalue of $\bf A$.

%In the sequel, we first solve the optimization (23). Based on the result $f(\tau )$, we then investigate
%the power minimization problem associated with (23) whose detail will be presented later.
%Combining the above two results, we provide
%an optimal solution $\{{\bf{Q}}_a^o,{{\bf{Q}}_j^o} \}$ to (23) such that $\textrm{rank}\{{{\bf{Q}}_a^o}\}=1$.
%In doing this way, we show that the optimization (23) is indeed a tight approximation of the optimization (22)
%and $g(\tau )=f(\tau )$.

Letting $\xi = ({{1 + {{\bf{g}}_2}{\bf{Q}}_j{\bf{ g}}_2^H}})^{-1} > 0$, ${{\bf{\tilde Q}}_a} = \xi {{\bf{Q}}_a}$, ${{\bf{\tilde Q}}_j}{\rm{ = }}\xi {{\bf{Q}}_j}$, and using the Charnes-Cooper transformation \cite{Boyd04},
we can recast the optimization problem of (11) as
\begin{align}
f(\tau )=&{\mathop {\max }\limits_{\{ {{{\bf{\tilde Q}}}_a}\succeq {\bf{0}},{{{\bf{\tilde Q}}}_j}\succeq {\bf{0}},\xi >0 \}} } {{{{\bf{h}}_1}{\bf{\tilde Q}}_a{\bf{h}}_1^H}}
\nonumber\\
& \quad \quad \quad {\rm{s.t.}} \quad
{{\xi + {{\bf{g}}_2}{\bf{\tilde Q}}_j{\bf{ g}}_2^H}}=1
\nonumber \\
&\quad \quad \quad \ \quad \quad
{{\bf{G}}_1}{\bf \tilde Q}_a{\bf{G}}_1^H \preceq \tau (\xi {\bf{I}} + {{\bf{H}}_2}{\bf \tilde Q }_j{\bf{ H}}_2^H)
\nonumber \\
&\quad \quad \quad \ \quad \quad {\rm{tr}} \{ {\bf{\tilde Q}}_a + {\bf{\tilde Q}}_j\} \le \xi P\textrm{,}
\end{align}
which is a SDP and can be efficiently solved using available
software packages, e.g., CVX \cite{Boyd04}.

Let us consider the power minimization problem associated with (11), which can be formulated as follows:
\begin{align}
{\mathop {\min }\limits_{\{ {\bf{Q}}_a ,{\bf{Q}}_j \} } } \quad &
{\rm{tr}} \{ {\bf{Q}}_a + {\bf{Q}}_j\}
\nonumber\\
{{\rm{s}}{\rm{.t}}{\rm{.}}} \quad &
{{{{\bf{h}}_1}{\bf{Q}}_a{\bf{h}}_1^H}}/({{1 + {{\bf{g}}_2}{\bf{Q}}_j{\bf{ g}}_2^H}})\ge f(\tau)
\nonumber\\
\quad &
{{\bf{G}}_1}{\bf Q}_a{\bf{G}}_1^H \preceq \tau ({\bf{I}} + {{\bf{H}}_2}{\bf Q }_j{\bf{ H}}_2^H)
\nonumber\\
\quad &
{{\bf{Q}}_a}\succeq {\bf{0}},{{\bf{Q}}_j}\succeq {\bf{0}}\textrm{,}
\end{align}
where $f(\tau )$ is obtained by solving the optimization problem of (12).
We have the following two propositions.

\begin{proposition}
Denote the optimal solution to (13) as
$\{ {{\hat{\bf{Q}}}_a},{{\hat{\bf{Q}}}_j} \}$. Then, ${{\hat{\bf{Q}}}_a}$ is rank-one
provided that a positive secrecy rate is achieved.
\end{proposition}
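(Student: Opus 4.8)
The plan is to recast (13) as a semidefinite program and read off the rank of $\hat{\bf Q}_a$ from its KKT conditions. First I would note that, since a positive secrecy rate is achieved at the optimum, $f(\tau)>0$, and since $1+{\bf g}_2{\bf Q}_j{\bf g}_2^H>0$ always holds, the fractional constraint (13a) is \emph{equivalent} (not merely a relaxation) to the linear inequality ${\bf h}_1{\bf Q}_a{\bf h}_1^H\ge f(\tau)\big(1+{\bf g}_2{\bf Q}_j{\bf g}_2^H\big)$. With this substitution, (13) becomes an SDP in $({\bf Q}_a,{\bf Q}_j)$; it is feasible (any optimal pair of (12), suitably rescaled, satisfies the constraints) and Slater's condition is met, so strong duality and the KKT conditions hold.

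Next I would introduce dual variables: a scalar $\lambda\ge 0$ for the linearized rate constraint, a matrix ${\bf Z}\succeq{\bf 0}$ for $\tau({\bf I}+{\bf H}_2{\bf Q}_j{\bf H}_2^H)-{\bf G}_1{\bf Q}_a{\bf G}_1^H\succeq{\bf 0}$, and ${\bf Y}_a\succeq{\bf 0}$, ${\bf Y}_j\succeq{\bf 0}$ for ${\bf Q}_a\succeq{\bf 0}$, ${\bf Q}_j\succeq{\bf 0}$. Forming the Lagrangian and imposing stationarity with respect to ${\bf Q}_a$ yields ${\bf Y}_a={\bf I}+{\bf G}_1^H{\bf Z}{\bf G}_1-\lambda\,{\bf h}_1^H{\bf h}_1$, while complementary slackness gives ${\bf Y}_a\hat{\bf Q}_a={\bf 0}$, so that the column space of $\hat{\bf Q}_a$ lies in $\mathrm{null}({\bf Y}_a)$ and hence $\mathrm{rank}(\hat{\bf Q}_a)\le N_a-\mathrm{rank}({\bf Y}_a)$.

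The crux is then a rank count. Because ${\bf Z}\succeq{\bf 0}$, the matrix ${\bf I}+{\bf G}_1^H{\bf Z}{\bf G}_1\succeq{\bf I}\succ{\bf 0}$ has full rank $N_a$, and subtracting the rank-at-most-one term $\lambda\,{\bf h}_1^H{\bf h}_1$ can reduce its rank by at most one; therefore $\mathrm{rank}({\bf Y}_a)\ge N_a-1$ and consequently $\mathrm{rank}(\hat{\bf Q}_a)\le 1$. To finish, I would rule out $\hat{\bf Q}_a={\bf 0}$: the linearized constraint forces ${\bf h}_1\hat{\bf Q}_a{\bf h}_1^H\ge f(\tau)>0$, so $\hat{\bf Q}_a\ne{\bf 0}$ and thus $\mathrm{rank}(\hat{\bf Q}_a)=1$. (As a by-product, $\lambda>0$ necessarily, since $\lambda=0$ would make ${\bf Y}_a$ full rank and force $\hat{\bf Q}_a={\bf 0}$.) The main obstacle I anticipate is the careful justification of the constraint qualification / strong duality that licenses the KKT analysis, together with confirming that the linearization of (13a) is an exact equivalence; both points rest precisely on $f(\tau)>0$, i.e., on the positive-secrecy-rate hypothesis.
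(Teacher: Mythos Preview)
Your proposal is correct and follows essentially the same route as the paper: linearize the fractional constraint, write the KKT conditions of the resulting SDP, and use stationarity in ${\bf Q}_a$ together with complementary slackness to conclude that $\hat{\bf Q}_a$ has rank at most one, then exclude the zero case via $f(\tau)>0$. The only cosmetic difference is that the paper rewrites the stationarity condition as $({\bf I}+{\bf G}_1^H{\bf Z}_1{\bf G}_1)\hat{\bf Q}_a=\mu\,{\bf h}_1^H{\bf h}_1\hat{\bf Q}_a$ and reads the rank from the right-hand side, whereas you bound $\mathrm{rank}({\bf Y}_a)\ge N_a-1$ and invoke the null-space inclusion; these are equivalent, and your explicit discussion of the linearization and Slater's condition is, if anything, more careful than the paper's.
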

\begin{proof}
See Appendix \ref{appA}.
\end{proof}

\begin{proposition}
Denote the optimal solution to (13) as
$\{ {{\hat{\bf{Q}}}_a},{{\hat{\bf{Q}}}_j} \}$. Then, $\{ {{\hat{\bf{Q}}}_a},{{\hat{\bf{Q}}}_j} \}$
is also the optimal solution to the problem of (11).
\end{proposition}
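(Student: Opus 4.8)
The plan is to use the elementary equivalence between a rate-maximization problem under a power budget and the associated power-minimization problem under a rate target, and to \emph{close the loop} between (11) and (13) in both directions. I will freely use that (11) admits an optimal solution $\{{\bf Q}_a^\star,{\bf Q}_j^\star\}$ with optimal value $f(\tau)$; attainment follows from the SDP reformulation (12), or directly from compactness of the feasible set of (11). The key structural remark is that the LMI constraint ${\bf G}_1{\bf Q}_a{\bf G}_1^H\preceq\tau({\bf I}+{\bf H}_2{\bf Q}_j{\bf H}_2^H)$ and the semidefiniteness constraints appear verbatim in both (11) and (13); only the roles of the fractional quantity ${\bf h}_1{\bf Q}_a{\bf h}_1^H/(1+{\bf g}_2{\bf Q}_j{\bf g}_2^H)$ and of the trace ${\rm tr}\{{\bf Q}_a+{\bf Q}_j\}$ are interchanged.

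First I would show that $\{{\hat{\bf Q}}_a,{\hat{\bf Q}}_j\}$ is feasible for (11). The point $\{{\bf Q}_a^\star,{\bf Q}_j^\star\}$ meets the rate constraint of (13) (its fractional objective equals $f(\tau)$), and it satisfies the LMI and semidefiniteness constraints; hence it is feasible for (13), and therefore the optimal value of (13) is at most ${\rm tr}\{{\bf Q}_a^\star+{\bf Q}_j^\star\}\le P$. The minimum in (13) is attained, since intersecting its feasible set with $\{{\rm tr}\{{\bf Q}_a+{\bf Q}_j\}\le P\}$ gives a nonempty compact set. Consequently ${\rm tr}\{{\hat{\bf Q}}_a+{\hat{\bf Q}}_j\}\le P$, and since ${\hat{\bf Q}}_a,{\hat{\bf Q}}_j$ also satisfy the shared LMI and semidefiniteness constraints, $\{{\hat{\bf Q}}_a,{\hat{\bf Q}}_j\}$ is feasible for (11).

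Next I would show that $\{{\hat{\bf Q}}_a,{\hat{\bf Q}}_j\}$ attains the optimal value of (11). Since it is feasible for (11) and $f(\tau)$ is the optimal value of (11), we have ${\bf h}_1{\hat{\bf Q}}_a{\bf h}_1^H/(1+{\bf g}_2{\hat{\bf Q}}_j{\bf g}_2^H)\le f(\tau)$. On the other hand, the rate constraint of (13) forces this same quantity to be $\ge f(\tau)$. Hence it equals $f(\tau)$, so $\{{\hat{\bf Q}}_a,{\hat{\bf Q}}_j\}$ is a feasible point of (11) achieving its optimal value, i.e., it is an optimal solution of (11). The degenerate case $f(\tau)=0$ is covered by the same chain of inequalities.

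I do not expect a genuinely hard step. The one point that requires care is the loop-closing argument in the previous paragraph: one must invoke both the optimality of $f(\tau)$ for (11) and the rate constraint of (13) to pin the fractional objective of $\{{\hat{\bf Q}}_a,{\hat{\bf Q}}_j\}$ exactly to $f(\tau)$, rather than merely concluding feasibility. A secondary, routine point is justifying attainment of the minimum in (13) via the redundant bound ${\rm tr}\{{\bf Q}_a+{\bf Q}_j\}\le P$.
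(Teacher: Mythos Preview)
Your proposal is correct and follows essentially the same approach as the paper's proof: show that any optimizer of (11) is feasible for (13), deduce ${\rm tr}\{\hat{\bf Q}_a+\hat{\bf Q}_j\}\le P$ so that $\{\hat{\bf Q}_a,\hat{\bf Q}_j\}$ is feasible for (11), then sandwich the fractional objective between $f(\tau)$ (from optimality of (11)) and $f(\tau)$ (from the rate constraint of (13)). Your added remarks on attainment via compactness are more careful than the paper, but the logical skeleton is identical.
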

\begin{proof}
See Appendix \ref{appB}.
\end{proof}

Let ${{\bf{Q}}_a^o}={{\hat{\bf{Q}}}_a}$ and ${{\bf{Q}}_j^o}={{\hat{\bf{Q}}}_j}$.
Combining Proposition 1 with Proposition 2, we get that $\{{{\bf{Q}}_a^o}, {{\bf{Q}}_j^o} \}$
is the optimal solution to the problem of (11), such that ${\rm{rank}}\{{\bf{Q}}_a^o\}=1$. Therefore,
the optimization problem of (11) is indeed a tight approximation of the optimization problem of (10). Moreover,
$\{{{\bf{Q}}_a^o},{{\bf{Q}}_j^o} \}$ is also the optimal solution to the problem of (10)
and $g(\tau)=f(\tau)$.

\subsection{Conditions to ensure s.d.o.f. equal to 1}
As stated in the preceding sections, it is difficult to obtain an analytical expression for the secrecy capacity
for the helper-assisted Gaussian wiretap channel.
Instead, in this subsection, we investigate the conditions under which a s.d.o.f. equal to 1 can be achieved.
To this end, we first introduce an alternative optimization problem as follows:
\begin{subequations}
\begin{align}
{\rm {find}} \quad & \{{\bf v}, {\bf W}\}  \\
{\rm{s.t.}} \quad  & {\rm{span( }}{{\bf{G}}_1}{\bf{v}}{\rm{) }} \subset {\rm{span( }}{{\bf{H}}_2}{\bf{W}}{\rm{) }} \\
& {\rm{span( }}{{\bf{g}}_2}{\bf{W}}{\rm{) }} \cap {\rm{span}}({{\bf{h}}_1}{\bf{v}}) =\{ 0\} \\
& |{{\bf{h}}_1}{\bf{v}}| > 0.
\end{align}
\end{subequations}
Specifically, we aim to find the point at which the subspace
spanned by the message signal and that spanned by the jamming signal have no intersection at the legitimate receiver,
such that $R_d$ scales with ${\rm{log}}(P)$.
Simultaneously, the subspace spanned by the message signal belongs to the subspace spanned by the jamming signal
at the eavesdropper, such that $R_e$ converges to a constant as $P$ approaches to infinity.

In the sequel, we first give two key lemmas. Lemma 1 proves that s.d.o.f. equal to 1 can be achieved
if and only if the optimization problem of (14) returns a nonempty set. Lemma 2
gives the conditions under which the optimization problem of (14) returns a nonempty set.
Combining the two lemmas, we finally obtain the conditions to ensure s.d.o.f equal to 1
in the helper-assisted MISOME Gaussian wiretap channel.

%From the perspective of secure degrees of freedom, our proposed scheme is optimal.

\newtheorem{lemma}{Lemma}
\begin{lemma}
\textit{The secure degrees of freedom equal to 1 can be achieved if and only if
the optimization problem of (14) returns a nonempty set.}
\end{lemma}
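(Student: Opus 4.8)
The plan is to establish the two implications separately, exploiting the fact that the s.d.o.f. is a high‑power limit and is therefore determined solely by the asymptotic subspace geometry of the transmit covariances.

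\textbf{Sufficiency.} Suppose $\{{\bf v},{\bf W}\}$ is feasible for (14); normalize so that $\|{\bf v}\|=1$. Because the legitimate receiver has a single antenna, ${\rm{span}}({\bf g}_2{\bf W})$ and ${\rm{span}}({\bf h}_1{\bf v})$ are subspaces of $\mathbb{C}$, so (14c) together with (14d) is equivalent to ${\bf g}_2{\bf W}={\bf 0}$, i.e., the jamming is nulled at the legitimate receiver. I would then scale the power by setting ${\bf Q}_a=\alpha P\,{\bf v}{\bf v}^H$ and ${\bf Q}_j=(\beta P/{\rm{tr}}({\bf W}{\bf W}^H))\,{\bf W}{\bf W}^H$ with fixed $\alpha,\beta>0$, $\alpha+\beta\le 1$, so that the constraint in (8) holds. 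Since ${\bf g}_2{\bf Q}_j{\bf g}_2^H=0$, (7a) and (14d) give $R_d=\log(1+\alpha P|{\bf h}_1{\bf v}|^2)=\log P+O(1)$. For $R_e$, (14b) lets us write ${\bf G}_1{\bf v}={\bf H}_2{\bf W}{\bf c}$, and the identity $({\bf I}+{\bf A}{\bf B})^{-1}{\bf A}={\bf A}({\bf I}+{\bf B}{\bf A})^{-1}$ yields ${\bf v}^H{\bf G}_1^H({\bf I}+{\bf H}_2{\bf Q}_j{\bf H}_2^H)^{-1}{\bf G}_1{\bf v}={\bf c}^H{\bf S}({\bf I}+c{\bf S})^{-1}{\bf c}=O(1/P)$ with ${\bf S}={\bf W}^H{\bf H}_2^H{\bf H}_2{\bf W}$ and $c=\beta P/{\rm{tr}}({\bf W}{\bf W}^H)$; hence $R_e=\log(1+\alpha P\cdot O(1/P))=O(1)$ by (7b). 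Thus $R_d-R_e=\log P+O(1)$, so $C_s\ge\log P+O(1)$; since $R_d\le\log(1+P\|{\bf h}_1\|^2)$ at every feasible point, also $C_s\le\log(1+P\|{\bf h}_1\|^2)=\log P+O(1)$, and dividing by $\log P$ and letting $P\to\infty$ gives s.d.o.f. $=1$.

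\textbf{Necessity.} Suppose s.d.o.f. $=1$. For each $P$ the feasible set of (8) is compact and the objective continuous, so an optimal $\{{\bf v}(P),{\bf Q}_j(P)\}$ exists with $C_s(P)=R_d(P)-R_e(P)$. The bound $C_s\le\log(1+P\|{\bf h}_1\|^2)$ forces $R_d(P)=\log P+o(\log P)$, and $R_e(P)\ge0$ then forces $R_e(P)=o(\log P)$. From (7a) and $|{\bf h}_1{\bf v}(P)|^2\le\|{\bf h}_1\|^2\|{\bf v}(P)\|^2\le\|{\bf h}_1\|^2 P$, the denominator $1+{\bf g}_2{\bf Q}_j(P){\bf g}_2^H$ must stay bounded while $|{\bf h}_1{\bf v}(P)|^2=\Theta(P)$; hence $\|{\bf v}(P)\|=\Theta(\sqrt P)$ and $\bar{\bf v}(P)\triangleq{\bf v}(P)/\|{\bf v}(P)\|$ satisfies $|{\bf h}_1\bar{\bf v}(P)|\ge\delta>0$. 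Passing to a subsequence along which $\bar{\bf v}(P)\to{\bf v}^\star$ (compactness of the unit sphere) and ${\bf Q}_j(P)/P\to{\bf Q}_j^\infty\succeq{\bf 0}$ (as ${\rm{tr}}\,{\bf Q}_j(P)\le P$), take ${\bf W}^\star$ to be any matrix whose columns form an orthonormal basis of ${\rm{range}}({\bf Q}_j^\infty)$. Boundedness of ${\bf g}_2{\bf Q}_j(P){\bf g}_2^H$ gives ${\bf g}_2{\bf Q}_j^\infty{\bf g}_2^H=0$, hence ${\bf g}_2{\bf W}^\star={\bf 0}$, which together with $|{\bf h}_1{\bf v}^\star|\ge\delta>0$ establishes (14c) and (14d). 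It remains to show ${\bf G}_1{\bf v}^\star\in{\rm{span}}({\bf H}_2{\bf W}^\star)$: were this false, the $\Theta(P)$ message power at the eavesdropper would have a component lying outside the subspace occupied by the linearly-in-$P$ growing jamming, and a Schur-complement estimate (controlling the off-diagonal coupling through the dominant eigenvectors) shows this component survives the whitening by $({\bf I}+{\bf H}_2{\bf Q}_j(P){\bf H}_2^H)^{-1}$ and makes $R_e(P)=\Omega(\log P)$, contradicting $R_e(P)=o(\log P)$; hence (14b) holds and $\{{\bf v}^\star,{\bf W}^\star\}$ is feasible for (14).

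\textbf{Main obstacle.} The technical crux is the necessity direction: passing from a rate‑optimal \emph{sequence} of covariance matrices to a single fixed pair of beamformers, and showing the limiting directions satisfy (14b)--(14d) exactly rather than merely in closure. I expect the work to concentrate on a careful bookkeeping of eigenvalue scales --- splitting each covariance into a part whose eigenvalues diverge in $P$ and a bounded remainder, verifying that only the divergent part influences the d.o.f., and checking in particular that eigenvalues of ${\bf Q}_j(P)$ growing strictly more slowly than $P$ (e.g. like $\sqrt P$) are still, after inversion, too large to whiten a $\Theta(P)$ message component down to $o(\log P)$ scale --- together with a genericity/perturbation argument ruling out degenerate limits (rank drops in ${\bf H}_2{\bf W}^\star$, or ${\bf G}_1{\bf v}^\star$ drifting to the boundary of ${\rm{span}}({\bf H}_2{\bf W}^\star)$). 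The sufficiency direction, by contrast, is a direct power‑scaling construction and should be routine.
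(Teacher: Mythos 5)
Your sufficiency argument is correct and complete; it is essentially the power-scaling construction the paper itself carries out later (equations (20)--(25)), whereas the paper's Lemma~1 simply asserts sufficiency in one line. For necessity, however, you take a genuinely different route from the paper: the paper argues by a pointwise case analysis --- for any \emph{fixed} $\{{\bf v},{\bf W}\}$, failure of any one of (14b)--(14d) forces the d.o.f.\ achieved at that point to be zero --- which implicitly interchanges the high-$P$ limit with the supremum over input covariances and never confronts $P$-dependent optimal solutions. You correctly identify that passing from a rate-optimal \emph{sequence} $\{{\bf v}(P),{\bf Q}_j(P)\}$ to a single feasible point of (14) is the real content of the necessity claim, and your compactness/limit-extraction framework is the right shape for a rigorous proof.

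The gap is that your specific construction of ${\bf W}^\star$ does not close the argument. Two points. First, a minor one: $R_d(P)=\log P+o(\log P)$ only forces $1+{\bf g}_2{\bf Q}_j(P){\bf g}_2^H=P^{o(1)}$, not boundedness; this is harmless since ${\bf g}_2{\bf Q}_j^\infty{\bf g}_2^H=0$ still follows. Second, and fatally for the contradiction step: defining ${\bf W}^\star$ from the range of ${\bf Q}_j^\infty=\lim {\bf Q}_j(P)/P$ discards every jamming direction whose power grows like $P^{1-o(1)}$ but is $o(P)$ (e.g.\ $P/\log P$). Such a direction vanishes from ${\bf Q}_j^\infty$, yet it still whitens a $\Theta(P)$ message component at the eavesdropper down to SNR $P^{o(1)}$, i.e.\ $R_e(P)=o(\log P)$. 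Concretely, taking $w$ orthogonal to ${\rm span}({\bf H}_2{\bf W}^\star)$ only yields $w^H({\bf I}+{\bf H}_2{\bf Q}_j(P){\bf H}_2^H)w = 1+P\cdot o(1)$, so the Cauchy--Schwarz lower bound gives $\gamma_e(P)=\omega(1)$ but not $\Omega(P^{\epsilon})$, and no contradiction with $R_e=o(\log P)$ results. Repairing this requires the multi-scale eigenvalue bookkeeping you flag in your ``main obstacle'' paragraph --- grouping the eigendirections of ${\bf Q}_j(P)$ by the exponent of their power growth and showing that every direction used to suppress the message must carry power $P^{1-o(1)}$, hence (by the ${\bf g}_2$-leakage budget $P^{o(1)}$) must converge into ${\rm null}({\bf g}_2)$ --- but this is precisely the step you leave unexecuted, so as written the necessity direction is not proved.
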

\begin{proof}
Clearly, if the optimization problem of (14) returns a nonempty set,
then s.d.o.f. equal to 1 can be achieved.
So the sufficiency holds true.

We now turn to prove the necessity by contradiction.
If the optimization problem of (14) returns an empty set,
then at least one of the constraints in (14) does not hold true.
We test (14b)-(14d) one by one:
\begin{enumerate}
\item If (14b) does not hold true, then there exists a direction along which the eavesdropper can
extract the message signal without interference, so the rate at which $R_e$ scales with ${\rm {log}}(P)$ is 1.
Together with (4),(8) and the fact that the rate at which
$R_d$ scales with ${\rm {log}}(P)$ is at most 1 for the multi-input single-output (MISO) source-receiver channel,
we arrive at $s.d.o.f.=0$.
\item If (14c) does not hold true, then
the message signal is aligned in the subspace spanned by the jamming signal, so $R_d$
converges to a constant when $P$ approaches to infinity, which indicates that
$s.d.o.f.=0$.
\item If (14d) does not hold true, then $|{{\bf{h}}_1}{\bf{v}}| =0$,
which indicates that $R_d=0$, thus $s.d.o.f.=0$.
\end{enumerate}
Summarizing, if the optimization problem of (14) returns an empty set, $s.d.o.f.=0$.
Therefore, if $s.d.o.f.=1$,
the optimization problem of (14) returns a nonempty set.
This completes the proof.
%Therefore, if $s.d.o.f.\ne 0$, then the optimization (10) returns a nonempty set.
%Combining with the fact that $s.d.o.f. > 1$ never
%happens due to the fact the rate at which $R_d$ scales with ${\rm {log}}(P)$ is at most 1,
%we conclude that if $s.d.o.f.=1$, then the optimization (10) returns a nonempty set.
\end{proof}

Before proceeding to Lemma 2, we first introduce the generalized
singular value decomposition (GSVD) transform, which provides the basis for the proof of Lemma 2 to follow.

\textit{Definition 1 (GSVD Transform)}:
Given two matrices ${\bf H}\in {{\mathbb C} ^{N \times M}}$ and
${\bf G }\in {{\mathbb C} ^{N \times K}}$, let
\begin{subequations}
\begin{align}
k\triangleq &  {\rm {rank}}\{[{\bf H}^H, {\bf G}^H]^T\}\\
p\triangleq & {\rm {dim}}\{ {\rm {span}}({\bf H})^\perp  \cap {\rm {span}}({\bf G}) \}  \\
r \triangleq &  {\rm{dim}}\{{\rm {span}}({\bf H})\cap {\rm {span}}({\bf G})^\perp\} \\
s \triangleq & {\rm {dim}} \{{\rm {span}}({\bf H})\cap {\rm {span}}({\bf G})\}\textrm{,}
%n \triangleq & {\rm {dim}} \{{\rm {span}}({\bf H})^\perp  \cap {\rm {span}}({\bf G})^\perp\} \\
\end{align}
\end{subequations}
then we have
\begin{subequations}
\begin{align}
k= & \min\{M+K,N\}\\
p= & k- \min \{M,N\} \\
r= & k- \min \{K,N\} \\
s= & \min \{M,N\}+\min \{K,N\}-k.
%n= & N-k \\
\end{align}
\end{subequations}
The proof is given in Appendix \ref{appC}.
According to \cite{Paige81}, the GSVD of $({\bf H}^H,{\bf G}^H)$ returns
unitary matrices ${\bf\Psi}_1 \in {{\mathbb C} ^{M\times M}}$ and ${\bf\Psi}_2 \in {{\mathbb C} ^{K\times K}}$,
non-negative diagonal matrices ${\bf D}_1\in {{\mathbb C} ^{M\times k}}$ and ${\bf D}_2\in {{\mathbb C} ^{K\times k}}$,
and a matrix ${\bf X}\in {{\mathbb C} ^{N\times k}}$ with ${\rm{rank}}\{{\bf X}\}=k$, such that
\begin{subequations}
\begin{align}
& {\bf H\Psi}_1={\bf X}{\bf D}_1^H \\
& {\bf G\Psi}_2={\bf X}{\bf D}_2^H\textrm{,}
\end{align}
\end{subequations}
in which
${{\bf{D}}_1} = \left[ {\begin{array}{*{20}{c}}
{{{\bf{I}}_{r}}}&{\bf{0}}&{\bf{0}}\\
{\bf{0}}&{{{\bf{S}}_1}}&{\bf{0}}\\
{\bf{0}}&{\bf{0}}&{{{\bf{0}}}}
\end{array}} \right]$,
${{\bf{D}}_2} = \left[ {\begin{array}{*{20}{c}}
{{{\bf{0}}}}&{\bf{0}}&{\bf{0}}\\
{\bf{0}}&{{{\bf{S}}_2}}&{\bf{0}}\\
{\bf{0}}&{\bf{0}}&{{{\bf{I}}_p}}
\end{array}} \right]$, where the diagonal entries of ${\bf S}_1\in \mathbb{R}^{s\times s}$
and ${\bf S}_2\in \mathbb{R}^{s\times s}$ are greater than 0,
and ${\bf{D}}_1^H{{\bf{D}}_1} + {\bf{D}}_2^H{{\bf{D}}_2} = {\bf{I}}$.

For simplicity, in the following text of this paper, we denote the above \emph{GSVD Transform} as
\begin{align}
({\bf\Psi}_1,{\bf\Psi}_2,{\bf D}_1,{\bf D}_2,{\bf X},k,r,s,p)={\rm {gsvd}}({\bf H}^H,{\bf G}^H). \nonumber
\end{align}

\begin{lemma}
\textit{The optimization problem of (14) returns a nonempty set if and only if} $N_e < N_a +N_j-1$.
\end{lemma}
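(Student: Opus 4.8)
The plan is to prove both directions by translating the subspace conditions (14b)--(14d) into dimension counts, using the GSVD transform of the relevant channel pairs. First I would handle the message/jammer interaction at the eavesdropper. The constraint (14b) requires $\mathrm{span}(\mathbf{G}_1\mathbf{v}) \subset \mathrm{span}(\mathbf{H}_2\mathbf{W})$; since $\mathbf{v}$ is a single vector, this says the one-dimensional ray $\mathbf{G}_1\mathbf{v}$ lies in the column space of $\mathbf{H}_2\mathbf{W}$. The cleanest way to make this tractable is to first absorb $\mathbf{W}$: by choosing $\mathbf{W}$ with full column rank $N_j$ (which costs nothing asymptotically), $\mathrm{span}(\mathbf{H}_2\mathbf{W}) = \mathrm{span}(\mathbf{H}_2)$, so (14b) becomes $\mathbf{G}_1\mathbf{v}\in\mathrm{span}(\mathbf{H}_2)$, i.e. $\mathbf{v}$ must lie in the preimage $\{\mathbf{v} : \mathbf{G}_1\mathbf{v}\in\mathrm{span}(\mathbf{H}_2)\}$. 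Applying $\mathrm{gsvd}(\mathbf{G}_1^H,\mathbf{H}_2^H)$ (with ambient dimension $N_e$, and $N_a$, $N_j$ columns) identifies exactly which coordinate directions of $\mathbf{G}_1\mathbf{v}$ can be forced into $\mathrm{span}(\mathbf{H}_2)$: the part of $\mathrm{span}(\mathbf{G}_1)$ that sits inside $\mathrm{span}(\mathbf{H}_2)$ has dimension $s = \min\{N_a,N_e\} + \min\{N_j,N_e\} - \min\{N_a+N_j,N_e\}$ by (17d), and the part of $\mathrm{span}(\mathbf{G}_1)$ transverse to $\mathrm{span}(\mathbf{H}_2)$ (the ``$r$'' block) has dimension $r = \min\{N_a+N_j,N_e\} - \min\{N_j,N_e\}$. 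So the set of $\mathbf{v}$ satisfying (14b) is precisely the kernel of the ``$r$-block'' of $\mathbf{G}_1$ in GSVD coordinates, a subspace of $\mathbb{C}^{N_a}$ of dimension $N_a - r$.

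Next I would impose (14c) and (14d) on this restricted set. Both are genericity conditions: (14d) says $\mathbf{h}_1\mathbf{v}\neq 0$, and (14c) says the scalar $\mathbf{g}_2\mathbf{W}\mathbf{z}$-subspace (which, for generic full-rank $\mathbf{W}$, is all of $\mathbb{C}$ since $\mathbf{g}_2\neq 0$ unless $N_b$-side degeneracies occur) does not swallow the scalar $\mathbf{h}_1\mathbf{v}$ — but wait, at the legitimate receiver with $N_b=1$ everything is a scalar, so (14c) is really the statement that we can arrange the jamming power at the legitimate receiver to not drown $x$; in the d.o.f. sense this reduces to $\mathbf{h}_1\mathbf{v}\neq 0$ together with being able to keep $R_d$ growing, which is automatic once $\mathbf{h}_1\mathbf{v}\neq 0$ and we do not force jamming onto the receiver. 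Hence the binding constraint is: does there exist $\mathbf{v}$ in the $(N_a-r)$-dimensional solution space of (14b) with $\mathbf{h}_1\mathbf{v}\neq 0$? For generic $\mathbf{h}_1$ this fails only if that space is $\{0\}$, i.e. if $N_a - r \le 0$. Substituting $r = \min\{N_a+N_j,N_e\} - \min\{N_j,N_e\}$ and doing the case split on whether $N_e \le N_j$, $N_j < N_e \le N_a+N_j$, or $N_e > N_a+N_j$, one checks that $N_a > r$ holds exactly when $N_e < N_a + N_j$; and one must further verify the strict form — the existence of $\mathbf{v}$ with $\mathbf{h}_1\mathbf{v}\neq 0$ \emph{and} (14c) satisfiable — tightens this to $N_e \le N_a + N_j - 1$, equivalently $N_e < N_a+N_j-1$ fails only at the boundary $N_e = N_a+N_j-1$, which must be examined separately: there the solution space of (14b) is one-dimensional, and one checks whether the unique direction can simultaneously satisfy $\mathbf{h}_1\mathbf{v}\neq 0$; a careful count (the message occupies one dimension at the eavesdropper, which must be covered by $\mathbf{H}_2\mathbf{W}$, leaving the jamming to also align there) shows it cannot, giving the strict inequality.

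For the converse direction I would argue the contrapositive: if $N_e \ge N_a + N_j - 1$ then any feasible $\mathbf{v},\mathbf{W}$ for (14b) forces $\mathbf{h}_1\mathbf{v}=0$ (violating (14d)) or else the alignment at the eavesdropper is impossible — this is just the dimension count above read backwards, using that $\mathrm{span}(\mathbf{H}_2\mathbf{W})$ can have dimension at most $\min\{N_e, N_j\}\le N_j$, so covering $\mathbf{G}_1\mathbf{v}$ requires $\mathbf{G}_1\mathbf{v}$ to land in an at-most-$N_j$-dimensional subspace inside $\mathbb{C}^{N_e}$, and the generic position of $\mathbf{G}_1$ and $\mathbf{h}_1$ then leaves no room for a nonzero $\mathbf{h}_1\mathbf{v}$. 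The main obstacle I anticipate is the boundary case $N_e = N_a+N_j-1$: the dimension counts from (16)--(17) are tight there, and one needs the genericity of the channel matrices (almost-sure full-rank / general-position assumptions implicit in the model) to conclude that the single remaining candidate direction for $\mathbf{v}$ genuinely fails (14c)/(14d), rather than merely generically fails. I would make this rigorous by writing the constraints explicitly in GSVD coordinates — where $\mathbf{G}_1$, $\mathbf{H}_2$ are simultaneously diagonalized — so that the ``$r$'' block becomes a literal identity block and the condition $\mathbf{h}_1\mathbf{v}=0$ becomes a transparent linear equation whose consistency with the GSVD-kernel constraint can be read off directly.
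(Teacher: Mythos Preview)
Your proposal has a genuine gap in the handling of constraint (14c), and this is exactly where the ``$-1$'' in the threshold $N_a+N_j-1$ comes from. At the legitimate receiver $N_b=1$, so $\mathbf{h}_1\mathbf{v}$ and $\mathbf{g}_2\mathbf{W}$ are both scalars (a $1\times 1$ quantity and a $1\times K_j$ row). If $\mathbf{h}_1\mathbf{v}\neq 0$ then $\mathrm{span}(\mathbf{h}_1\mathbf{v})=\mathbb{C}$, and (14c) forces $\mathrm{span}(\mathbf{g}_2\mathbf{W})=\{0\}$, i.e.\ $\mathbf{g}_2\mathbf{W}=0$ \emph{exactly}. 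This is not a genericity condition that can be waved away as ``we do not force jamming onto the receiver''; it is a hard linear constraint on $\mathbf{W}$ that restricts it to the $(N_j-1)$-dimensional null space of $\mathbf{g}_2$. Consequently you cannot take $\mathbf{W}$ full rank $N_j$ and replace $\mathrm{span}(\mathbf{H}_2\mathbf{W})$ by $\mathrm{span}(\mathbf{H}_2)$: the jamming subspace available at the eavesdropper is only $\mathrm{span}(\mathbf{H}_2\boldsymbol{\Gamma})$ with $\boldsymbol{\Gamma}=\mathrm{null}(\mathbf{g}_2)\in\mathbb{C}^{N_j\times(N_j-1)}$.

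This is why your dimension count lands on $N_e<N_a+N_j$ instead of $N_e<N_a+N_j-1$, and why you then need an ad hoc ``boundary case'' argument that does not actually close. The paper's route is to impose $\mathbf{g}_2\mathbf{W}=0$ \emph{first}, set $\bar{\mathbf{H}}_2=\mathbf{H}_2\boldsymbol{\Gamma}\in\mathbb{C}^{N_e\times(N_j-1)}$, and only then run the GSVD on $(\bar{\mathbf{H}}_2^H,\mathbf{G}_1^H)$. With $N_j$ replaced by $N_j-1$ throughout, the relevant block dimension becomes $p_2=k_2-\min\{N_j-1,N_e\}$, and the condition ``there exists $\mathbf{v}\neq 0$ with $\mathbf{G}_1\mathbf{v}\in\mathrm{span}(\bar{\mathbf{H}}_2)$'' is $p_2<N_a$; the case split then gives $N_e<N_a+N_j-1$ directly, with no separate boundary analysis needed. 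If you redo your GSVD with $N_j-1$ in place of $N_j$, your argument becomes essentially the paper's.
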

\begin{proof}
We start with the constraint of (14c).
With the \emph{GSVD Transform} of $({\bf h}_1^H,{\bf g}_2^H)$, we get
$s_1\triangleq {\rm {dim}} \{{\rm {span}}({\bf h}_1)\cap {\rm {span}}({\bf g}_2)\}=1$.
Thus to satisfy (14c), we should have $|{{\bf{h}}_1}{\bf{v}}|=0$ or $|{{\bf{g}}_2}{\bf{W}}|=0$.
However, $|{{\bf{h}}_1}{\bf{v}}|=0$ contradicts (14d), so we should have $|{{\bf{g}}_2}{\bf{W}}|=0$.

Without loss of generality, let ${\bf W}= {\bf \Gamma W}_1$ where
${\bf\Gamma} = {\rm {null}}\{{\bf g}_2\} \in {{\mathbb C} ^{N_j \times (N_j-1)}}$,
${\bf{W}}_1\in {{\mathbb C} ^{(N_j-1)\times (N_j-1)}}$.
Substituting ${\bf W}= {\bf \Gamma W}_1$ into (14b), we arrive at
\begin{align}
 {\rm{span}}({\bf{G}}_1{\bf{v}}) \subset {\rm{span}}({{\bf{H}}_2\Gamma}{\bf{W}}_1)\textrm{,}
\end{align}
in which ${\bf{G}}_1\in {\mathbb C} ^{N_e \times N_a}$ and
$\bar {\bf H}_2 \triangleq {\bf H}_2 \Gamma\in {{\mathbb C} ^{N_e \times (N_j-1)}}$.

Invoking the \emph{GSVD Transform } of $(\bar{\bf H}_2^H,{\bf G}_1^H)$,
we obtain
\begin{align}
(\bar{\bf\Psi}_1,\bar{\bf\Psi}_2,\bar{\bf D}_1,\bar{\bf D}_2,\bar{\bf X},k_2,r_2,s_2,p_2)
={\rm {gsvd}}(\bar{\bf H}_2^H,{\bf G}_1^H)\textrm{,} \nonumber
\end{align}
such that
\begin{subequations}
\begin{align}
& \bar {\bf H}_2 \bar{\bf\Psi}_1=\bar{\bf X}\bar{\bf D}_1^H \\
& {\bf G}_1 \bar{\bf\Psi}_2=\bar{\bf X}\bar{\bf D}_2^H\textrm{,}
\end{align}
\end{subequations}
where ${k_2} = \min \{ {N_a} + {N_j} - 1,{N_e}\} $,
$p_2=k_2-\min \{N_j-1,N_e\}$, $r_2=k_2-\min \{N_a,N_e\}$ and
${s_2} = \min \{ {N_a},{N_e}\}  + \min \{ {N_j} - 1,{N_e}\}  - k_2$.

To satisfy (14d), we should have ${\bf v} \ne {\bf 0}$, which, together with (19a) and (19b),
indicates that (18) holds true if and only if $p_2<N_a$.
\begin{enumerate}
\item For the case of ${N_e} \le {N_j} - 1$, $p_2=N_e-N_e =0$. So
$p_2<N_a$.
\item For the case of ${N_j} - 1 < N_e < N_a + {N_j} - 1$, $p_2=N_e-(N_j-1) < N_a$.
\item For the case of $N_e \ge N_a + {N_j} - 1$, $p_2= N_a +N_j- 1-(N_j- 1)=N_a$.
\end{enumerate}
Summarizing, $p_2<N_a$ if and only if $N_e < N_a +N_j-1$.

Therefore, if $N_e < N_a +N_j-1$, (18) holds true, thus the
optimization problem of (14) returns a nonempty set.
Otherwise, if $N_e \ge N_a +N_j-1$, (18) does not hold true, so
the optimization problem of (14) returns an empty set.
In a word, the optimization problem of (14) returns a nonempty set
if and only if $N_e < N_a +N_j-1$.
This completes the proof.
\end{proof}

\newtheorem{theorem}{Throrem}
\begin{theorem}
\textit{The secure degrees of freedom equal to 1 can be achieved if and only if }$N_e < N_a +N_j-1$.
\end{theorem}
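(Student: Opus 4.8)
The plan is to obtain the theorem as an immediate corollary of Lemma 1 and Lemma 2, so essentially no new work is required. First I would record the chain of equivalences
\begin{align}
s.d.o.f. = 1 \;\Longleftrightarrow\; \text{problem (14) returns a nonempty set} \;\Longleftrightarrow\; N_e < N_a + N_j - 1, \nonumber
\end{align}
where the first equivalence is precisely the statement of Lemma 1 and the second is precisely the statement of Lemma 2 (which in turn rests on the GSVD dimension count established in Appendix \ref{appC}). Composing the two biconditionals yields exactly the claim: a secure degree of freedom equal to $1$ is achievable if and only if $N_e < N_a + N_j - 1$.

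In carrying this out I would be careful only to line up the two directions correctly: the sufficiency direction of Lemma 1 (nonempty feasible set $\Rightarrow$ $s.d.o.f. = 1$) composed with the ``if'' direction of Lemma 2 ($N_e < N_a + N_j - 1$ $\Rightarrow$ nonempty feasible set) gives the sufficiency part of the theorem; dually, the necessity direction of Lemma 1 ($s.d.o.f. = 1$ $\Rightarrow$ nonempty feasible set) composed with the ``only if'' direction of Lemma 2 gives the necessity part. For completeness I would also note the induced dichotomy: since the legitimate receiver has a single antenna, the pre-log factor of $R_d$ is at most $1$ (cf. the MISO source--receiver bound invoked in the proof of Lemma 1), hence the s.d.o.f. takes only the values $0$ or $1$; consequently the theorem is equivalent to the complementary statement that the s.d.o.f. is $0$ if and only if $N_e \ge N_a + N_j - 1$.

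There is no genuine obstacle here — all the substantive content lives in Lemmas 1 and 2 — so the ``hard part'' is purely bookkeeping: making sure the quantifier directions match and that the (only) nontrivial ingredient, namely the GSVD-based argument of Lemma 2, has already been fully justified. I would therefore keep the proof to the two-line equivalence chain above plus the remark on the $\{0,1\}$ dichotomy.
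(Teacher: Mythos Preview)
Your proposal is correct and matches the paper's own proof essentially verbatim: the paper simply states that combining Lemma 1 with Lemma 2 yields the result. Your additional remark on the $\{0,1\}$ dichotomy is a harmless elaboration but is not needed for the argument.
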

\begin{proof}
Combining Lemma 1 with Lemma 2, it is clear that the secure degrees of freedom
equal to 1 can be achieved if and only if $N_e < N_a +N_j-1$. This completes the proof.
\end{proof}

%\subsection{suboptimal solution to maximize the secrecy rate}
\emph{Remark:}
It is worthwhile to note that Lemma 2 also provides us with a way
to determine the precoding matrices at the source and the helper to achieve
s.d.o.f. of 1. % of 1 can be achieved, thus enabling us to attain a s.d.o.f.-optimal solution to the original SRM problem.
In the remaining text of this subsection, we first give the precoding matrices to achieve
s.d.o.f. of 1 in closed-form. We then substitute the derived precoding matrices into (8) and solve
for the optimal power allocation between the message signal and the jamming signal.
Consequently, we obtain an analytical lower bound on the secrecy capacity.
%which in turn explicitly corroborates that the achieved s.d.o.f. is 1.

%As a by-product of Theorem 1, in the following text of this subsection, we give a suboptimal but closed-form solution
%to maximize the secrecy rate for the scenario where $N_e < N_a +N_j-1$.
Revisiting the proof of Lemma 2,
for the case of $N_e < N_a +N_j-1$, $k_2=N_e$, $r_2=k_2-\min \{N_a,N_e\}$ and
${s_2} = \min \{ {N_a},{N_e}\}  + \min \{ {N_j} - 1,{N_e}\}  - k_2$.
Actually, $s_2>0$ which can be justified as follows:
(i) For the case of ${N_e} \le {N_j} - 1$, ${s_2} = \min \{ {N_a},{N_e}\} >0$;
(ii) For the case of ${N_j} - 1 < N_e < N_a + {N_j} - 1$,
${s_2} = \min \{ {N_a},{N_e}\}  - (N_j-1)-N_e>0 \Leftrightarrow \min \{ {N_a},{N_e}\} > {N_e} - ({N_j} - 1)$
where the latter inequality can be easily verified.

Let $\mathcal{I} = \{j |r_2+1 \le j \le r_2+s_2, j \in \mathbb{Z} \}$. Since $s_2>0$, $\mathcal{I}$ is a nonempty set.
Let ${\bf W}_1^o=\bar{\bf\Psi}_1(:,i)/|\bar{\bf\Psi}_1(:,i)|$ and ${\bf v}_o=\bar{\bf\Psi}_2(:,i+N_a-k_2)/|\bar{\bf\Psi}_2(:,i+N_a-k_2)|$
where $i \in \mathcal{I} $.
According to (19a) and (19b), we arrive at ${\rm{span}}({\bf{G}}_1{\bf{v}}_o) = {\rm{span}}({{\bf{H}}_2\Gamma}{\bf{W}}_1^o)
={\rm{span}}(\bar{\bf{X}}(:,i)) $, so
$\{{\bf v}_o,{\bf \Gamma W}_1^o \}$ is a feasible point to (14).
Substituting ${\bf v}=\sqrt{x}{\bf v}_o$ and ${\bf W}= \sqrt{P-x}{\bf \Gamma} {\bf W}_1^o$ into (8), $0\le x\le P$,
we arrive at
\begin{align}
C_s^{\rm{sub}} \triangleq \mathop {\max }\limits_{\{ 0\le x\le P\} }
{\rm {log}} \dfrac{1+|{\bf h}_1{\bf v}_o|^2x}{1+ \gamma_e^{\rm{sub}}}\textrm{,}
\end{align}
in which
\begin{subequations}
\begin{align}
\gamma_e^{\rm{sub}}&=x{{\bf v}}_o^H{\bf{G}}_1^H{{({\bf{I}} + (P-x){\bar{\bf{H}}_2}{\bf W}_1^o{{\bf W}_1^o}^H\bar{\bf{ H}}_2^H)}^{ - 1}}{{\bf{G}}_1}{\bf v}_o \nonumber \\
&=x|{{\bf{G}}_1}{\bf v}_o|^2-\dfrac{x(P-x)|{{\bf W}_1^o}^H\bar{\bf{ H}}_2^H{{\bf{G}}_1}{\bf v}_o|^2}{1+(P-x)|\bar{\bf{H}}_2{\bf W}_1^o|^2} \\
&=x|{{\bf{G}}_1}{\bf v}_o|^2-\dfrac{x(P-x)|\bar{\bf{H}}_2{\bf W}_1^o|^2|{{\bf{G}}_1}{\bf v}_o|^2}{1+(P-x)|\bar{\bf{H}}_2{\bf W}_1^o|^2}\textrm{,}
\end{align}
\end{subequations}
where (21a) follows from the matrix inverse lemma \cite{Horn85}, and (21b) follows from the
fact that ${\rm{span}}({\bf{G}}_1{\bf{v}}_o) = {\rm{span}}({\bar{\bf{H}}_2}{\bf{W}}_1^o)={\rm{span}}(\bar{\bf{X}}(:,i))$.

For ease of exposition, let $a=|{\bf h}_1{\bf v}_o|^2$, $b=|{{\bf{G}}_1}{\bf v}_o|^2$
and $c=|\bar{\bf{H}}_2{\bf W}_1^o|^2$. Also, noting that the logarithm function is a
monotonically increasing function, therefore the optimization problem of (20) becomes
\begin{align}
2^{C_s^{\rm{sub}}}= \mathop {\max }\limits_{  0\le x\le P  } \eta (x)\textrm{,}
\end{align}
in which
\begin{align}
\eta (x) & \triangleq \dfrac{1+ax}{1+ bx-[bc(P-x)x/(1+c(P-x))]} \nonumber \\
& = \dfrac{(1+ax)[1+c(P-x)]}{1+cP+(b-c)x}.
\end{align}
%Substituting (20) into (19), we arrive at a same optimization as the optimization (12) in \cite{Lingxiang14}.
%Applying the same derivations as that from (13) to (19) of \cite{Lingxiang14}, we
%obtain an analytical form of $C_s^{sub}$ and the closed-form optimal solution $x^\star$ to (19).
Resorting to carefully mathematical deductions, we solve (22) and arrive at that when
the total transmit power $P$ is big enough,
\begin{align}
C_s^{\textrm{sub}} \approx&
%{\rm log} c(1+aP)\frac{{c + b - 2\sqrt {bc}  }}{{{{(c - b)}^2}}}
%\nonumber\\
%=&
{\rm log}(aP)-2 {\rm log}  (1+\sqrt{b/c})\textrm{,}
\end{align}
where the details are given in Appendix D.

Substituting $a=|{\bf h}_1{\bf v}_o|^2$, $b=|{{\bf{G}}_1}{\bf v}_o|^2$
and $c=|\bar{\bf{H}}_2{\bf W}_1^o|^2$ into (24) yields
\begin{align}
C_s^{\textrm{sub}} \approx&
{\rm log}(|{\bf h}_1{\bf v}_o|^2P)-2 {\rm log}  (1+\sqrt{|{{\bf{G}}_1}{\bf v}_o|^2/|\bar{\bf{H}}_2{\bf W}_1^o|^2})\textrm{,}
\end{align}
which in turn explicitly corroborates that s.d.o.f. equal to 1 has been achieved.
%Combining with the fact $|{\bf h}_1{\bf v}_o|>0$, we get to know the rate at which $C_s^{\textrm{sub}}$
%scales with ${\rm {log}}(P)$ is 1.

\section{Helper-Assisted MIMOME Wiretap Channel}
In this section, we consider the helper-assisted MIMOME wiretap channel where each terminal is
equipped with multiple antennas. In such MIMO case,
the SRM problem becomes more complex as compared with the problem considered
in Section III, and actually, it is still an open problem.
To deal with this issue, we first reformulate the SRM problem in (3)
to a form that can be processed with the Gauss-Seidel approach, which successively optimizes each
variable given that the other variables are fixed, thus giving
an iterative algorithm to solve (3). We then examine the maximal
achievable s.d.o.f. and reveal its connection
to the number of antennas at each terminal.
We obtain both the maximal achievable s.d.o.f. and the solution that achieves
the maximal s.d.o.f. in closed-form.
%Closed-form results on the
%maximal achievable s.d.o.f. and the solution to achieve it are obtained.
%As a by-product, for the case where
%positive s.d.o.f. can be achieved, we determine the beamforming matrix to
%achieve the maximal s.d.o.f. and resolve the remaining power allocation problem,
%thus giving a suboptimal but simple solution to the original SRM problem in (3).
%We then recast the original nonconvex SRM problem into
%a sequence of convex problems.

\subsection{Secrecy rate maximization}
In order to reformulate the SRM problem in (3)
to a form that can be processed with the Gauss-Seidel approach,
we need the following lemma.

\begin{lemma}
\textit{Given a positive definite matrix ${\bf E} \in \mathbb{C}^{N \times N}$, it holds that
\begin{align}
\ln |{{\bf{E}}^{ - 1}}| = \mathop {\max }\limits_{{\bf{S}} \in {^{N \times N}},{\bf{S}}\succeq {\bf 0}} \varphi ({\bf{S}}),
\end{align}
where $\varphi ({\bf{S}}) =  - {\rm tr}({\bf{SE}}) + \ln |{\bf{S}}| + N$.
Moreover, for the optimal solution to the right-hand side of (26), it holds that ${\bf{S}}^\star={{\bf{E}}^{ - 1}}$.}
\end{lemma}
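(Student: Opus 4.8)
The plan is to prove this as a standard variational characterization of the log-determinant, essentially the Fenchel-type identity $\ln|{\bf E}^{-1}| = \max_{{\bf S}\succeq {\bf 0}} \{-{\rm tr}({\bf S}{\bf E}) + \ln|{\bf S}|\} + N$. First I would reduce to an unconstrained critical-point computation: since $\varphi({\bf S})$ is concave in ${\bf S}$ over the (convex) cone of positive semidefinite matrices — the term $-{\rm tr}({\bf S}{\bf E})$ is linear and $\ln|{\bf S}|$ is well known to be concave on the positive definite cone — any stationary point in the interior of the cone is the global maximizer. Note also that the supremum cannot be attained on the boundary: if ${\bf S}$ is singular then $\ln|{\bf S}| = -\infty$, so the maximizer, if it exists, is positive definite.

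Next I would compute the gradient. Using the standard matrix-calculus identities $\nabla_{\bf S}\,{\rm tr}({\bf S}{\bf E}) = {\bf E}$ (here ${\bf E} = {\bf E}^H$) and $\nabla_{\bf S} \ln|{\bf S}| = {\bf S}^{-1}$, setting $\nabla_{\bf S}\varphi({\bf S}) = -{\bf E} + {\bf S}^{-1} = {\bf 0}$ yields the unique stationary point ${\bf S}^\star = {\bf E}^{-1}$, which is positive definite because ${\bf E}$ is. By concavity this is the global maximizer, which already establishes the ``moreover'' claim about the optimal solution. Substituting back gives
\begin{align}
\varphi({\bf S}^\star) = -{\rm tr}({\bf E}^{-1}{\bf E}) + \ln|{\bf E}^{-1}| + N = -N + \ln|{\bf E}^{-1}| + N = \ln|{\bf E}^{-1}|,
\end{align}
which is exactly the right-hand side value claimed in (26).

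As an alternative (and as a sanity check), one can diagonalize: write ${\bf E} = {\bf U}{\bf \Lambda}{\bf U}^H$ with ${\bf \Lambda} = {\rm diag}(\lambda_1,\dots,\lambda_N)$, $\lambda_i > 0$, and substitute ${\bf S} = {\bf U}{\bf T}{\bf U}^H$; then $\varphi$ is maximized by a diagonal ${\bf T}$ (off-diagonal entries only decrease $\ln|{\bf T}|$ via Hadamard's inequality while leaving ${\rm tr}({\bf T}{\bf \Lambda})$ unchanged), and the problem decouples into $N$ scalar problems $\max_{t_i>0}\{-t_i\lambda_i + \ln t_i + 1\}$, each solved by $t_i = 1/\lambda_i$ with optimal value $\ln(1/\lambda_i)$; summing gives $\sum_i \ln(1/\lambda_i) = \ln|{\bf E}^{-1}|$ and ${\bf S}^\star = {\bf U}{\bf \Lambda}^{-1}{\bf U}^H = {\bf E}^{-1}$.

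I do not expect a serious obstacle here; the only point requiring a little care is justifying that the stationary point is a genuine global maximum rather than merely a critical point — this is handled cleanly by invoking strict concavity of $\ln|{\bf S}|$ on the positive definite cone (so the stationary point is unique and globally optimal) together with the observation that $\varphi \to -\infty$ as ${\bf S}$ approaches the boundary of the cone or as $\|{\bf S}\| \to \infty$ (the $-{\rm tr}({\bf S}{\bf E})$ term dominates since ${\bf E} \succ {\bf 0}$), which guarantees the maximum is attained in the interior.
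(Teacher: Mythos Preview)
Your proof is correct and complete. The paper itself does not prove this lemma at all: its entire proof is a one-line reference to \cite{Jose11}, so there is nothing to compare against --- your concavity/first-order-condition argument (with the diagonalization sanity check) is exactly the standard derivation one would supply.
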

\begin{proof}
Please refer to \cite{Jose11}.
\end{proof}

Applying Lemma 3, we arrive at, respectively,
\begin{subequations}
\begin{align}
& \ln \left| {({\bf{I}} + {{\bf{G}}_2}{{\bf{Q}}_j}{\bf{G}}_2^H)^{-1}} \right|
= \mathop {\max }\limits_{{{\bf{S}}_0}\succeq{\bf{0}}} \varphi_b ({\bf{S}}_0)  \\
& \ln \left| {{{({\bf{I}} + {{\bf{H}}_2}{{\bf{Q}}_j}{\bf{H}}_2^H + {{\bf{G}}_1}{{\bf{Q}}_a}{\bf{G}}_1^H)}^{ - 1}}} \right|
= \mathop {\max }\limits_{{{\bf{S}}_1}\succeq{\bf{0}}} \varphi_e ({\bf{S}}_1)\textrm{,}
\end{align}
\end{subequations}
where $\varphi_b ({\bf{S}}_0)=
- {\rm {tr}}\{ {{\bf{S}}_0}({\bf{I}} + {{\bf{G}}_2}{{\bf{Q}}_j}{\bf{G}}_2^H)\}  + \ln |{{\bf{S}}_0}| + {N_b}$,
and $\varphi_e ({\bf{S}}_1)=
- {\rm {tr}}\{ {{\bf{S}}_1}({\bf{I}} + {{\bf{H}}_2}{{\bf{Q}}_j}{\bf{H}}_2^H + {{\bf{G}}_1}{{\bf{Q}}_a}{\bf{G}}_1^H)\}
+ \ln |{{\bf{S}}_1}| + {N_e}$.

Substituting (27a) and (27b) into (2a) and (2b), respectively, we arrive at
\begin{subequations}
\begin{align}
{R_d}& =\mathop {\max }\limits_{{{\bf{S}}_0}\succeq{\bf{0}}} \varphi_b ({\bf{S}}_0)
+ \ln \left| {{\bf{I}} + {{\bf{H}}_1}{{\bf{Q}}_a}{\bf{H}}_1^H + {{\bf{G}}_2}{{\bf{Q}}_j}{\bf{G}}_2^H} \right| \\
{R_e}& = -\mathop {\max }\limits_{{{\bf{S}}_1}\succeq{\bf{0}}} \varphi_e ({\bf{S}}_1)
-\ln \left| {{\bf{I}} + {{\bf{H}}_2}{{\bf{Q}}_j}{\bf{H}}_2^H} \right|.
\end{align}
\end{subequations}
Further, substituting (28a)(28b) into (3), we arrive at
\begin{align}
C_s = & \mathop {\max }\limits_{\{ {\bf Q}_a\succeq {\bf{0}},{{\bf{Q}}_j}\succeq {\bf{0}},{{\bf{S}}_0}\succeq{\bf{0}},{{\bf{S}}_1}\succeq{\bf{0}}\} }
\theta({\bf{S}}_0,{\bf{S}}_1,{\bf Q}_a,{\bf Q}_j)
\nonumber\\
\textrm{s.t.} \quad &  {\rm{tr}} \{ {\bf{Q}}_a+ {\bf{Q}}_j \} \le P\textrm{,}
\end{align}
where $\theta({\bf{S}}_0,{\bf{S}}_1,{\bf Q}_a,{\bf Q}_j)=
\varphi ({\bf{S}}_0)+ \varphi_e ({\bf{S}}_1) + \omega ({\bf{Q}}_a, {\bf{Q}}_j)$
in which
$\omega ({\bf{Q}}_a, {\bf{Q}}_j)=
\ln \left| {{\bf{I}} + {{\bf{H}}_1}{{\bf{Q}}_a}{\bf{H}}_1^H + {{\bf{G}}_2}{{\bf{Q}}_j}{\bf{G}}_2^H} \right|
+\ln \left| {{\bf{I}} + {{\bf{H}}_2}{{\bf{Q}}_j}{\bf{H}}_2^H} \right|$.
%in which
%$\omega ({\bf{Q}}_a, {\bf{Q}}_j)=\ln \left| {{\bf{I}} + {{\bf{H}}_2}{{\bf{Q}}_j}{\bf{H}}_2^H} \right|+
%\ln \left| {{\bf{I}} + {{\bf{H}}_1}{{\bf{Q}}_a}{\bf{H}}_1^H + {{\bf{G}}_2}{{\bf{Q}}_j}{\bf{G}}_2^H} \right|
%$.

Although the optimization problem of (29) is still not convex, we observe that if
we fix either $\{ {\bf Q}_a,{{\bf{Q}}_j}\}$ or ${{\bf{S}}_i}(i={1,2})$, the remaining problem
is convex and can thus be solved efficiently.
Hence, we turn to a two-stage iterative method (Gauss-Seidel approach),
and approximately solve the optimization problem of (29) via iterations between the following two subproblems.
\begin{enumerate}
\item Fix $\{ {\bf Q}_a,{{\bf{Q}}_j}\}$, and maximize
$\theta({\bf{S}}_0,{\bf{S}}_1,{\bf Q}_a,{\bf Q}_j)$ over $\{{\bf{S}}_0,{\bf{S}}_1\}$.
%\item Fix $\{ {\bf Q}_a,{{\bf{Q}}_j},{\bf{S}}_0\}$, and maximize
%$\theta({\bf{S}}_0,{\bf{S}}_1,{\bf Q}_a,{\bf Q}_j)$ over ${\bf{S}}_1$.
\item Fix $\{ {\bf{S}}_0,{\bf{S}}_1\}$, and maximize
$\theta({\bf{S}}_0,{\bf{S}}_1,{\bf Q}_a,{\bf Q}_j)$ over $\{{\bf Q}_a,{{\bf{Q}}_j}\}$.
\end{enumerate}
Specifically, when $\{ {\bf Q}_a,{{\bf{Q}}_j}\}$ is fixed, let
\begin{align}
\{{\bf{S}}_0^\star,{\bf{S}}_1^\star \}
=\rm{arg}\mathop {\max }\limits_{\{{\bf{S}}_0,{\bf{S}}_1\}} \ \theta({\bf{S}}_0,{\bf{S}}_1,{\bf Q}_a,{\bf Q}_j). \nonumber
\end{align}
Applying Lemma 3, we arrive at
\begin{align}
& {\bf{S}}_0^\star=({\bf{I}} + {{\bf{G}}_2}{{\bf{Q}}_j}{\bf{G}}_2^H)^{-1}, \nonumber \\
& {\bf{S}}_1^\star= {{{({\bf{I}} + {{\bf{H}}_2}{{\bf{Q}}_j}{\bf{H}}_2^H + {{\bf{G}}_1}{{\bf{Q}}_a}{\bf{G}}_1^H)}^{ - 1}}}. \nonumber
\end{align}
Besides, when $\{ {\bf{S}}_0,{\bf{S}}_1\}$ is fixed, the maximization of
$\theta({\bf{S}}_0,{\bf{S}}_1,{\bf Q}_a,{\bf Q}_j)$ over $\{{\bf Q}_a,{{\bf{Q}}_j}\}$
is a convex optimization problem and can be efficiently solved by available software packages, e.g., CVX \cite{Boyd04}.

One can easily verify that the above iterative process leads to a monotonically non-descending
objective function value of $\theta({\bf{S}}_0,{\bf{S}}_1,{\bf Q}_a,{\bf Q}_j)$.
%increases monotonically in the above iterative process.
Moreover, for a given limited transmit power, the achievable
secrecy rate is upper bounded. Thus, the above iterative algorithm is convergent.

\emph{Remark:} Although the above iterative algorithm provides no guarantee of
finding the global optimal solution to the problem of (29), our numerical results in the following section
show that it attains a fairly good secrecy rate performance.

\subsection{Maximal achievable secure degrees of freedom}
In this subsection, we examine the maximal
achievable s.d.o.f. and determine its connection
to the number of antennas at each terminal.
Similar to Section III, we first introduce an alternative optimization problem as follows:
\begin{subequations}
\begin{align}
d \triangleq \mathop{\rm {max}}\limits_{\{{\bf V},{\bf W}\} }  & {\rm {rank}}\{{\bf{H}}_1{\bf V}\}  \\
{\rm{s.t.}} \quad  & {\rm{span}}({\bf{G}}_1{\bf{V}}) \subset {\rm{span}}({\bf{H}}_2{\bf{W}}) \\
& {\rm{span( }}{{\bf{G}}_2}{\bf{W}}{\rm{) }} \cap {\rm{span}}({{\bf{H}}_1}{\bf{V}}) =\{{\bf 0}\}.
\end{align}
\end{subequations}
Specifically, we find the feasible points at which the subspace
spanned by the message signal and that spanned by the jamming signal have
no intersection at the legitimate receiver.
%such that $R_d$ scales with ${\rm{log}}(P)$.
Simultaneously, the subspace spanned by the message signal
belongs to the subspace spanned by the jamming signal
at the eavesdropper. % such that $R_e$ converges to a constant as $P$ approaches to infinity.
Among these feasible points, we determine the one at which the
rank of ${\bf{H}}_1{\bf V}$ is maximized.

%In the sequel, we first give two lemmas.
%The first lemma proves that the maximal achievable objective
%function value of (30) equals to the maximal achievable s.d.o.f..
%The second lemma gives both the maximal achievable objective function value of (30) and
%the optimal solution to achieve it in closed-form expressions.
%Combining the above two lemmas, we finally obtain a s.d.o.f.-optimal solution to the original
%SRM problem and arrive at the maximal achievable s.d.o.f.
%of the helper-assisted Gaussian wiretap channel in closed-form expressions.

%we finally arrive at the maximal achievable s.d.o.f.
%of the helper-assisted Gaussian wiretap channel and obtain the s.d.o.f.-optimal solution
%to achieve it in closed-form expressions.

\begin{lemma}
\textit{The maximal achievable secure degrees of freedom, defined in (4), equal to $d$. That is, $s.d.o.f=d$.}
\end{lemma}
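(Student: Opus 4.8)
The plan is to prove the two inequalities $s.d.o.f\ge d$ and $s.d.o.f\le d$ separately, in the spirit of the proof of Lemma~1, but now keeping track of the full pre-logs $\rho_d\triangleq\lim_{P\to\infty}R_d/{\rm log}\,P$ and $\rho_e\triangleq\lim_{P\to\infty}R_e/{\rm log}\,P$ rather than only whether $R_d$ and $R_e$ stay bounded.

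For the achievability part $s.d.o.f\ge d$, I would take an optimal pair $({\bf V}^o,{\bf W}^o)$ of problem (31), so ${\rm rank}\{{\bf H}_1{\bf V}^o\}=d$, and transmit with ${\bf Q}_a=\frac{\alpha P}{{\rm tr}\{{\bf V}^o{\bf V}^{o\,H}\}}{\bf V}^o{\bf V}^{o\,H}$ and ${\bf Q}_j=\frac{\beta P}{{\rm tr}\{{\bf W}^o{\bf W}^{o\,H}\}}{\bf W}^o{\bf W}^{o\,H}$ with $\alpha+\beta<1$, which is feasible for (3). Since an MMSE receiver performs at least as well as a zero-forcing one, $R_d$ is lower bounded by the rate obtained after projecting the received signal onto ${\rm span}({\bf G}_2{\bf W}^o)^\perp$; constraint (31c) makes that projection injective on ${\rm span}({\bf H}_1{\bf V}^o)$, so the projected message covariance keeps exactly $d$ eigenvalues of order $P$ and $R_d\ge d\,{\rm log}\,P+O(1)$. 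On the eavesdropper side, constraint (31b) confines the range of ${\bf G}_1{\bf Q}_a{\bf G}_1^H$ to ${\rm span}({\bf H}_2{\bf W}^o)$, the subspace on which $({\bf I}+{\bf H}_2{\bf Q}_j{\bf H}_2^H)^{-1}$ decays like $1/P$; numerator and denominator scale identically there, so $({\bf I}+{\bf H}_2{\bf Q}_j{\bf H}_2^H)^{-1}{\bf G}_1{\bf Q}_a{\bf G}_1^H$ stays bounded and $R_e=O(1)$. Hence $C_s\ge d\,{\rm log}\,P-O(1)$, giving $s.d.o.f\ge d$.

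For the converse $s.d.o.f\le d$, I would fix, for each $P$, a covariance pair attaining $C_s$, pass to a subsequence along which ${\bf Q}_a/P\to\bar{\bf Q}_a$, ${\bf Q}_j/P\to\bar{\bf Q}_j$ and all relevant ranks and subspaces converge, and let ${\bf V}$, ${\bf W}$ be bases of $\mathcal{V}\triangleq{\rm range}(\bar{\bf Q}_a)$ and ${\rm range}(\bar{\bf Q}_j)$. The key step is the high-SNR bookkeeping $\rho_d\le{\rm rank}\{{\bf H}_1{\bf V}\}-{\rm dim}\{{\rm span}({\bf H}_1{\bf V})\cap{\rm span}({\bf G}_2{\bf W})\}$ (no pre-log survives on the interference-dominated subspace at the legitimate receiver) and $\rho_e\ge{\rm dim}\,\Pi_{{\rm span}({\bf H}_2{\bf W})^\perp}({\rm span}({\bf G}_1{\bf V}))$ (the eavesdropper can always zero-force the jamming). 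Writing $\mathcal{V}_a=\{{\bf v}\in\mathcal{V}:{\bf H}_1{\bf v}\in{\rm span}({\bf G}_2{\bf W})\}$ and $\mathcal{V}_b=\{{\bf v}\in\mathcal{V}:{\bf G}_1{\bf v}\in{\rm span}({\bf H}_2{\bf W})\}$, a rank--nullity count gives ${\rm dim}\,\mathcal{V}_a\le{\rm dim}\,\mathcal{V}-\rho_d$ and ${\rm dim}\,\mathcal{V}_b\ge{\rm dim}\,\mathcal{V}-\rho_e$, so any complement $\tilde{\mathcal{V}}$ of $\mathcal{V}_a\cap\mathcal{V}_b$ inside $\mathcal{V}_b$ obeys ${\rm dim}\,\tilde{\mathcal{V}}\ge{\rm dim}\,\mathcal{V}_b-{\rm dim}\,\mathcal{V}_a\ge\rho_d-\rho_e=s.d.o.f$. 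Taking $\tilde{\bf V}$ to be a basis of $\tilde{\mathcal{V}}$, the pair $(\tilde{\bf V},{\bf W})$ is feasible for (31) — (31b) holds because $\tilde{\mathcal{V}}\subseteq\mathcal{V}_b$, and (31c) holds because $\tilde{\mathcal{V}}\cap\mathcal{V}_a=\{{\bf 0}\}$ forces ${\rm span}({\bf H}_1\tilde{\bf V})\cap{\rm span}({\bf G}_2{\bf W})=\{{\bf 0}\}$ — while ${\rm rank}\{{\bf H}_1\tilde{\bf V}\}={\rm dim}\,\tilde{\mathcal{V}}\ge s.d.o.f$. Hence $d\ge s.d.o.f$, and combined with the achievability part $s.d.o.f=d$.

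The main obstacle is the converse, and within it two technical points. First, one must rigorously justify the degrees-of-freedom bookkeeping for the MMSE rate expressions: that on the subspace where the interference power grows like $P$ no positive pre-log survives, while on its orthogonal complement the full pre-log of the useful signal is collected. This reduces to a careful eigenvalue (Schur-complement) analysis of $|{\bf I}+({\bf I}+{\bf B})^{-1}{\bf A}|$ when ${\bf A}\succeq{\bf 0}$ and ${\bf B}\succeq{\bf 0}$ both scale linearly in $P$, and also requires controlling directions whose power grows sublinearly. Second, since the optimizing covariance matrices depend on $P$, while the subspaces ${\rm span}({\bf H}_1{\bf V})$, ${\rm span}({\bf G}_2{\bf W})$, \ldots and their intersection dimensions are only semicontinuous, one must work along subsequences and check that all the above relations survive the passage to the limit. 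The achievability direction, by contrast, is essentially routine once problem (31) is available.
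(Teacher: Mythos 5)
Your argument is correct and follows the same skeleton as the paper's proof (Appendix E): achievability by substituting the optimizer of the alignment problem (your ``(31)'' is the paper's (30)) and showing $R_d$ collects the full pre-log while $R_e$ stays bounded, and a converse showing that any transmission scheme can be trimmed to a feasible point of that problem whose objective dominates the achieved pre-log. The difference is in how the converse's feasible point is constructed. The paper applies two successive GSVD transforms --- one to $({\bf W}^H{\bf G}_2^H,{\bf V}^H{\bf H}_1^H)$ to discard the columns of ${\bf V}$ whose image at the legitimate receiver falls into the jamming subspace, and one to the eavesdropper-side pair to retain only the aligned columns --- tracking ranks through explicit block structures. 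You replace this with the abstract subspaces $\mathcal{V}_a$, $\mathcal{V}_b$ and a complement of $\mathcal{V}_a\cap\mathcal{V}_b$ inside $\mathcal{V}_b$; the rank--nullity count is equivalent but cleaner, and it makes the feasibility of $(\tilde{\bf V},{\bf W})$ for (30) and the injectivity of ${\bf H}_1$ on $\tilde{\mathcal{V}}$ transparent. You are also more explicit than the paper about the dependence of the optimizing covariances on $P$: the paper's equation (49) silently identifies a fixed pair $\{{\bf V},{\bf W}\}$ with its achieved s.d.o.f., whereas you pass to subsequences of ${\bf Q}_a/P$, ${\bf Q}_j/P$. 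Both arguments ultimately rest on the same high-SNR bookkeeping --- that the pre-log of $\log|{\bf I}+({\bf I}+{\bf B})^{-1}{\bf A}|$ equals the signal rank minus the interference-contaminated dimension, and that directions with sublinearly growing power contribute no degrees of freedom --- which the paper asserts without proof and which you at least identify as the remaining technical obstacle; so your proposal is no less rigorous than the published proof.
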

\begin{proof}
See Appendix \ref{appE}
\end{proof}

From Lemma 4, we observe that to obtain the $s.d.o.f$,
we need only to focus on solving (30). To this end, in the sequel, we first
give a heuristic method which gives a closed-form feasible point $\{\hat{\bf V},\hat{\bf W}\}$ to (30),
followed by the derivation of $d^\star \triangleq {\rm {rank}}\{{\bf{H}}_1{\hat{\bf V}}\}$ in closed-form.
Subsequently, in Lemma 5, we prove that $d=d^\star$.
Combining Lemma 4 and Lemma 5,
we finally obtain $s.d.o.f=d^\star$. Further, we prove that $\{\hat{\bf V},\hat{\bf W}\}$
achieves the maximal s.d.o.f., i.e., $\{\hat{\bf V},\hat{\bf W}\}$
constituting the s.d.o.f.-optimal solution to the original SRM problem.

%we finally obtain the maximal achievable secure degrees of freedom
%in closed-form expressions for
%the helper-assisted MIMO Gaussian wiretap channel.

\begin{table}[!htp]%开始表格
%! 表示尽可能的尝试, h (here) 当前位置显示表格,
%如果实在不行显示在 b (bottom)底部.
\caption{A heuristic method to obtain $\{\hat{\bf V},\hat{\bf W}\}$ which is feasible to (30)}
 \begin{tabular}{p{0.95\linewidth}}
  \hline
\emph{Case I:} $N_a\ge N_e+N_b$. Let $\hat{\bf V}={\rm {null}}\{{\bf G}_1\}$ and $\hat{\bf W}={\bf 0}$.
 \\\\

\emph{Case II:} $N_j\ge N_b+N_e$. Let $\hat{\bf W}={\rm {null}}\{{\bf G}_2\}\in \mathbb{C}^{N_a \times (N_j-N_b)}$ and
$\hat{\bf V}$ be the right singular matrix of ${\bf H}_1$.\\\\

\emph{Case III:} $N_a < N_e+N_b$ and $N_b<N_j < N_e+N_b$.
For a start, let ${\bf V}_0={\rm {null}}\{{\bf G}_1\}$ and $d_0=(N_a-N_e)^+$.
Secondly, denote $\bar {\bf H}_2 = {\bf H}_2 {\bf \Gamma}\in {{\mathbb C} ^{N_e \times (N_j-N_b)}}$
where ${\bf \Gamma}={\rm {null}}\{{\bf G}_2\}\in \mathbb{C}^{N_a \times (N_j-N_b)}$.
Denote $\bar{\bf G}_1={\bf G}_1{\bf V}_0^c$ where ${\bf V}_0^c={\rm {null}}\{{\bf V}_0^H\}\in \mathbb{C}^{N_a \times N_e}$.
Invoking the \emph{GSVD Transform} of $(\bar{\bf H}_2^H,\bar{\bf G}_1^H)$ yields
\begin{equation}
({\bf\Psi}_1,{\bf\Psi}_2,{\bf D}_1,{\bf D}_2,{\bf X},k_3,r_3,s_3)
={\rm {gsvd}}(\bar{\bf H}_2^H,\bar{\bf G}_1^H).
\end{equation}
Subsequently, let $d_1=s_3$, $c_3=r_3+N_e-k_3$, and check
\begin{enumerate}
  \item If $d_0+d_1\ge N_b$, let ${\bf W}_1={\bf \Gamma}{\bf\Psi}_1(:,r_3+1:r_3+N_b-d_0)$ and
${\bf V}_1={\bf V}_0^c{\bf\Psi}_2(:,c_3+1:c_3+N_b-d_0)$.
Lastly, let $\hat{\bf V}=[{\bf V}_0,{\bf V}_1]$ and $\hat{\bf W}={\bf W}_1$.
  \item Otherwise, let ${\bf W}_1={\bf \Gamma}{\bf\Psi}_1(:,r_3+1:r_3+s_3)$ and
${\bf V}_1={\bf V}_0^c{\bf\Psi}_2(:,c_3+1:c_3+s_3)$.
Thirdly,
denote ${\bf V}_{01}^c={\rm {null}}\{[{\bf V}_0,{\bf V}_1]^H\}\in
\mathbb{C}^{N_a \times (N_a-d_0-d_1)}$ and $\tilde{\bf G}_1={\bf G}_1{\bf V}_{01}^c$.
Invoking the \emph{GSVD Transform} of $({\bf H}_2^H,\tilde{\bf G}_1^H)$ yields
\begin{equation}
(\tilde{\bf\Psi}_1,\tilde{\bf\Psi}_2,\tilde{\bf D}_1,\tilde{\bf D}_2,\tilde{\bf X},k_4,r_4,s_4)
={\rm {gsvd}}({\bf H}_2^H,\tilde{\bf G}_1^H).
\end{equation}
Then let
${\bf W}_2=\tilde{\bf\Psi}_1(:,r_4+1:r_4+d_2)$ and
${\bf V}_2={\bf V}_{01}^c\tilde{\bf\Psi}_2(:,c_4+1:c_4+d_2)$ in which
$d_2=\min \{s_4,\lfloor \frac{N_b-(d_0+d_1)}{2} \rfloor\}$ and $c_4=r_4+(N_a-d_0-d_1)-k_4$. Lastly,
let $\hat{\bf V}=[{\bf V}_0,{\bf V}_1,{\bf V}_2]$ and $\hat{\bf W}=[{\bf W}_1,{\bf W}_2]$.
\end{enumerate}
\\

\emph{Case IV: }$N_a < N_e+N_b$ and $N_j \le N_b$.
For a start, let ${\bf V}_0={\rm {null}}\{{\bf G}_1\}$ and $d_0=(N_a-N_e)^+$.
Secondly, denote ${\bf V}_{0}^c={\rm {null}}\{{\bf V}_0^H\}\in
\mathbb{C}^{N_a \times N_e}$ and $\bar{\bf G}_1={\bf G}_1{\bf V}_{0}^c$.
Invoking the \emph{GSVD Transform} of $({\bf H}_2^H,\bar{\bf G}_1^H)$ yields
\begin{equation}
({\bf\Psi}_1,{\bf\Psi}_2,{\bf D}_1,{\bf D}_2,{\bf X},k_4,r_4,s_4)
={\rm {gsvd}}({\bf H}_2^H,\bar{\bf G}_1^H).
\end{equation}
Then let
${\bf W}_2={\bf\Psi}_1(:,r_4+1:r_4+d_2)$ and
${\bf V}_2={\bf V}_{0}^c{\bf\Psi}_2(:,c_4+1:c_4+d_4)$ in which
$d_2=\min \{s_4,\lfloor \frac{N_b-d_0}{2} \rfloor\}$ and $c_4=r_4+N_a-k_4$.
Lastly, let $\hat{\bf V}=[{\bf V}_0,{\bf V}_2]$ and $\hat{\bf W}={\bf W}_2$.
\\

\hline
\end{tabular}
\end{table}

\begin{table*}[!t]
\renewcommand{\arraystretch}{1.6}
\centering
\caption{Summary of the closed-form results on $d^\star (s.d.o.f.)$}
\begin{tabular}{|c|c|}
\hline
\textbf{Inequalities on the number of antennas at terminals} & $d^\star (s.d.o.f.)$  \\
\hline
$N_a \ge N_e+N_b$  & \\ \cline{1-1}
$N_j \ge N_e+N_b$ & $\min \{N_a,N_b\}$  \\ \cline{1-1}
$2N _b+N_e-N_j \le N_a < N_e+N_b$ & \\
$N_b < N_j < N_e +N_b$ & \\
\hline
$N _b+N_e-N_j < N_a <2N _b+N_e-N_j$ & $N_a+N_j-(N_b+N_e)+\min \{s,\lfloor \frac {2N_b+N_e-N_a-N_j}{2} \rfloor\}$\\
$N_b < N_j < N_e +N_b$ & $s=\min\{N_b+N_e-N_j,N_e\}+\min\{N_j,N_e\}-N_e$\\
\hline
$N_e < N_a < N_e +N_b$, $N_j \le N_b$  & $N_a-N_e+\min \{s,\lfloor \frac {N_b+N_e-N_a}{2} \rfloor\}$, $s=\min\{N_j,N_e\}$ \\
\hline
$N_a \le N _b+N_e-N_j$, $N_b < N_j < N_e +N_b $ & $\min \{s,\lfloor \frac {N_b}{2} \rfloor\}$\\ \cline{1-1}
$N_a \le N_e$, $N_j \le N_b$ & $s=\min\{N_a,N_e\}+\min\{N_j,N_e\}-\min\{N_a+N_j,N_e\}$ \\
\hline
\end{tabular}
\end{table*}

The aforementioned heuristic method to obtain $\{\hat{\bf V},\hat{\bf W}\}$ is shown in Table I.
Notice that in Table I, ${\rm {null}}\{{\bf G}_1\}$ returns an empty matrix when $N_a\le N_e$.
In the following text, we prove that $\{\hat{\bf V},\hat{\bf W}\}$ is a feasible solution for
(30), and derive the closed-form expression for $d^\star$.
%In the derivation of $d^\star$,
%a critical property on matrix will be used. That is,
%given two matrices $\bf A$ and ${\bf B}=[{\bf B}_1,{\bf B}_0]$, if $\bf B$ is invertible,
%then the following equality holds true
%\begin{align}
%{\rm {span}}({\bf A}{\bf B}_1) \cap {\rm {span}}({\bf A}{\bf B}_0)=\{\bf 0\}
%\end{align}
%where the proof can be found in Appendix \ref{appF}.
As in Table I, four cases are discussed.

In \emph{Case I} and \emph{Case II}, it is clear that $\{\hat{\bf V},\hat{\bf W}\}$ is feasible
to (30) and $d^\star={\rm {rank}}\{{\bf{H}}_1{\hat{\bf V}}\}=\min \{N_a,N_b\}$.

In \emph{Case III},
for the subcase of $d_0+d_1\ge N_b$, $\hat{\bf V}=[{\bf V}_0,{\bf V}_1]$ and $\hat{\bf W}={\bf W}_1$.
According to (31), we get ${\rm {span}}({\bf G}_2{\bf W}_1)=\{\bf{0}\}$ and
${\rm {span}}({\bf H}_2{\bf W}_1)={\rm {span}}({\bf G}_1{\bf V}_1)$.
In addition, ${\bf G}_1{\bf V}_0={\bf {0}}$.
So, ${\rm {span}}({\bf H}_2\hat{\bf W})={\rm {span}}({\bf G}_1\hat{\bf V})$
and ${\rm {span}}({\bf H}_1\hat{\bf V})\bigcap{\rm {span}}({\bf G}_2\hat{\bf W})=\{\bf{0}\}$,
which indicate that $\{\hat{\bf V},\hat{\bf W}\}$ is feasible to (30).
Furthermore, ${\bf V}_0$ is orthogonal with ${\bf V}_1$ by definition, thus
\begin{align}
d^\star=&{\rm {rank}}\{[{\bf V}_0,{\bf V}_1]\} \nonumber \\
=&{\rm {rank}}\{{\bf V}_0\} + {\rm {rank}}\{{\bf V}_1\}=N_b. \nonumber
\end{align}
For the subcase of $d_0+d_1< N_b$,
$\hat{\bf V}=[{\bf V}_0,{\bf V}_1,{\bf V}_2]$ and $\hat{\bf W}=[{\bf W}_1,{\bf W}_2]$.
As in the subcase of $d_0+d_1\ge N_b$,
${\bf G}_1{\bf V}_0={\bf {0}}$, ${\rm {span}}({\bf G}_2{\bf W}_1)=\{\bf{0}\}$ and
${\rm {span}}({\bf H}_2{\bf W}_1)={\rm {span}}({\bf G}_1{\bf V}_1)$.
In addition, according to (32),
${\rm {span}}({\bf H}_2{\bf W}_2)={\rm {span}}({\bf G}_1{\bf V}_2)$
and $d_2=\min \{s_4,\lfloor \frac{N_b-(d_0+d_1)}{2} \rfloor\}$.
Thus
${\rm {span}}({\bf H}_2\hat{\bf W})={\rm {span}}({\bf G}_1\hat{\bf V})$ and
${\rm {span}}({\bf H}_1\hat{\bf V})\bigcap{\rm {span}}({\bf G}_2\hat{\bf W})=\{\bf{0}\}$.
Therefore
$\{\hat{\bf V},\hat{\bf W}\}$ is feasible to (30). Furthermore,
%which together with ${\rm {span}}({\bf H}_2{\bf W}_1)={\rm {span}}({\bf G}_1{\bf V}_1)$ gives
%${\rm {span}}({\bf H}_2\hat{\bf W})={\rm {span}}({\bf G}_1\hat{\bf V})$.
%On the other hand, ${\rm {span}}({\bf H}_1\hat{\bf V})\bigcap{\rm {span}}({\bf G}_2\hat{\bf W})=\{\bf{0}\}$
%due to $d_2=\min \{s_4,\lfloor \frac{N_b-(d_0+d_1)}{2} \rfloor\}$.
%Therefore, for the case of $d_0+d_1< N_b$, $\{\hat{\bf V},\hat{\bf W}\}$ is feasible to (30).
%On the other hand,
$[{\bf V}_0,{\bf V}_1]$ is orthogonal with ${\bf V}_2$ by definition, thus
\begin{align}
d^\star=&{\rm {rank}}\{[{\bf V}_0,{\bf V}_1,{\bf V}_2]\} \nonumber \\
=&{\rm {rank}}\{[{\bf V}_0,{\bf V}_1]\}+{\rm {rank}}\{{\bf V}_2\} \nonumber \\
=&{\rm {rank}}\{{\bf V}_0\} + {\rm {rank}}\{{\bf V}_1\}+
{\rm {rank}}\{{\bf V}_2\} \nonumber \\
=&\min \{d_0+d_1+d_2,N_a\}. \nonumber
\end{align}

In \emph{Case IV}, $\hat{\bf V}=[{\bf V}_0,{\bf V}_2]$ and $\hat{\bf W}={\bf W}_2$.
According to (33), ${\rm {span}}({\bf H}_2{\bf W}_2)={\rm {span}}({\bf G}_1{\bf V}_2)$,
which, together with ${\bf G}_1{\bf V}_0={\bf {0}}$, gives
${\rm {span}}({\bf H}_2\hat{\bf W})= {\rm {span}}({\bf G}_1\hat{\bf V})$.
In addition, ${\rm {span}}({\bf H}_1\hat{\bf V})\cap {\rm {span}}({\bf G}_2\hat{\bf W})=\{\bf 0\}$
due to $d_2=\min \{s_4,\lfloor \frac{N_b-d_0}{2} \rfloor\}$. Thus,
$\{\hat{\bf V}, \hat{\bf W}\}$ is feasible
to (30). Furthermore, ${\bf V}_0$ is orthogonal with ${\bf V}_2$ by definition, therefore
\begin{align}
d^\star=&{\rm {rank}}\{[{\bf V}_0,{\bf V}_2]\} \nonumber \\
=&{\rm {rank}}\{{\bf V}_0\} + {\rm {rank}}\{{\bf V}_2\} \nonumber \\
=& \min \{d_0+d_2,N_a\}. \nonumber
\end{align}

Summarizing the above four cases, we can rewrite $d^\star$
into a more compact form as follows:
\begin{align}
d^\star=\min \{d_0^\star+d_1^\star+d_2^\star,N_a,N_b\}\textrm{,}
\end{align}
in which
\begin{subequations}
\begin{align}
&d_0^\star=(N_a-N_e)^+ \\
&d_1^\star=(\min \{N_a,N_e\}+(N_j-N_b)^+-N_e)^+ \\
&d_2^\star=\min \{s, (\lfloor \frac {N_b-(d_0^\star+d_1^\star)}{2} \rfloor)^+\}\textrm{,}
\end{align}
\end{subequations}
where $s=\min\{N_a-(d_0^\star+d_1^\star),N_e\}+\min\{N_j,N_e\}-\min\{N_a-(d_0^\star+d_1^\star)+N_j,N_e\}$.

To gain more insight into $d^\star$,
we give Table II which clarifies the connection of
$d^\star$ to the number of antennas
at each terminal.

\begin{lemma}
\textit{On $d$ defined in (30), we have $d=d^\star$ where $d^\star$ is given in (34).}
\end{lemma}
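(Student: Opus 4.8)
Since the feasibility of the Table~I construction has just been established, we already have $d\ge d^\star$; the substance of this lemma is therefore the converse, $d\le d^\star$, i.e.\ that \emph{every} feasible $\{{\bf V},{\bf W}\}$ for (30) obeys ${\rm{rank}}\{{\bf H}_1{\bf V}\}\le d^\star$. The plan is to prove this by a dimension count on ${\rm{span}}({\bf V})$, ${\rm{span}}({\bf W})$ and their images under the four channels, organized along the same case split on $N_a$ and $N_j$ against $N_b+N_e$ that underlies the definition of $\{\hat{\bf V},\hat{\bf W}\}$, and using throughout that the channel matrices have generic (full) rank.

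First I would normalize the feasible pair. Put $m\triangleq{\rm{rank}}\{{\bf H}_1{\bf V}\}$. Since every subspace of ${\rm{span}}({\bf V})$ stays feasible with the same ${\bf W}$, we may assume ${\rm{dim}}\{{\rm{span}}({\bf V})\}=m$ with ${\bf H}_1$ injective on ${\rm{span}}({\bf V})$; and since ${\rm{span}}({\bf G}_1{\bf V})\subseteq{\rm{span}}({\bf H}_2{\bf W})\subseteq{\rm{span}}({\bf H}_2)$, we may shrink ${\bf W}$ to a minimal jammer contained in the original ${\rm{span}}({\bf W})$, so that ${\rm{span}}({\bf H}_2{\bf W})={\rm{span}}({\bf G}_1{\bf V})$ and ${\bf H}_2$ is injective on ${\rm{span}}({\bf W})$; neither reduction affects (30b) or (30c). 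Introduce $d_0\triangleq{\rm{dim}}\{{\rm{span}}({\bf V})\cap{\rm{null}}\{{\bf G}_1\}\}$ and $d_1\triangleq{\rm{dim}}\{{\rm{span}}({\bf W})\cap{\rm{null}}\{{\bf G}_2\}\}$, and set $d_2\triangleq m-d_0-d_1\ge0$; then ${\rm{dim}}\{{\rm{span}}({\bf W})\}=m-d_0$ and ${\rm{dim}}\{{\rm{span}}({\bf G}_2{\bf W})\}=d_2$.

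The core of the argument is then three inequalities. (i) $d_0\le{\rm{dim}}\{{\rm{null}}\{{\bf G}_1\}\}=(N_a-N_e)^+=d_0^\star$ and $m-d_0={\rm{dim}}\{{\rm{span}}({\bf G}_1{\bf V})\}\le\min\{N_a,N_e\}$. (ii) The jamming lying in ${\rm{null}}\{{\bf G}_2\}$ is invisible at the legitimate receiver, yet at the eavesdropper it occupies a $d_1$-dimensional subspace of ${\rm{span}}({\bf H}_2\,{\rm{null}}\{{\bf G}_2\})$ which, being inside ${\rm{span}}({\bf H}_2{\bf W})={\rm{span}}({\bf G}_1{\bf V})\subseteq{\rm{span}}({\bf G}_1)$, sits in ${\rm{span}}({\bf G}_1)\cap{\rm{span}}({\bf H}_2\,{\rm{null}}\{{\bf G}_2\})$; a generic inclusion--exclusion count in $\mathbb{C}^{N_e}$ then gives $d_1\le(\min\{N_a,N_e\}+(N_j-N_b)^+-N_e)^+=d_1^\star$. (iii) By (30c), ${\rm{span}}({\bf H}_1{\bf V})$ (dimension $m$) and ${\rm{span}}({\bf G}_2{\bf W})$ (dimension $d_2$) are in direct sum inside $\mathbb{C}^{N_b}$, whence $m+d_2\le N_b$, i.e.\ $m\le\lfloor(N_b+d_0+d_1)/2\rfloor$; moreover the message directions whose eavesdropper images are covered by the \emph{visible} part of the jamming must align inside ${\rm{span}}({\bf G}_1)$, and a second generic inclusion--exclusion count in $\mathbb{C}^{N_e}$ bounds $d_2$ by the quantity $s$ of (35). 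Combining (i)--(iii) with the trivial bounds $m\le N_a$ and $m\le N_b$ and running the short case check of Table~I forces $m=d_0+d_1+d_2\le\min\{d_0^\star+d_1^\star+d_2^\star,N_a,N_b\}=d^\star$, which with $d\ge d^\star$ gives $d=d^\star$.

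The main obstacle is not any single inequality but the bookkeeping that distills the piecewise expression (34)--(35) out of (i)--(iii): jamming placed in ${\rm{null}}\{{\bf G}_2\}$ is ``free'' at the legitimate receiver but still consumes eavesdropper dimensions that must hold all of ${\rm{span}}({\bf G}_1{\bf V})$, so the optimal balance among the three mechanisms (null-${\bf G}_1$ source beamforming, null-${\bf G}_2$ jamming, and visible jamming) switches between regimes exactly as in \emph{Case~I}--\emph{Case~IV}, and the floor $\lfloor\cdot/2\rfloor$ and the quantity $s$ have to be propagated cleanly through that case analysis. To make the generic rank/intersection statements used in (i)--(iii) rigorous, I would invoke the \emph{GSVD Transform} together with its dimension identities (16), applied to the same matrix pairs $(\bar{\bf H}_2^H,\bar{\bf G}_1^H)$ and $({\bf H}_2^H,\tilde{\bf G}_1^H)$ that appear in Table~I: in the GSVD coordinates the alignment constraints (30b)--(30c) confine the columns of the normalized ${\bf V}$ and ${\bf W}$ to a prescribed set of blocks whose total contribution to ${\rm{rank}}\{{\bf H}_1{\bf V}\}$ is $d_0^\star+d_1^\star+d_2^\star$, from which the converse bound follows.
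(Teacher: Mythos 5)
Your overall strategy coincides with the paper's own proof in Appendix~F: the direction $d\ge d^\star$ comes for free from the feasibility of the Table~I point, and the converse rests on exactly the two constraint-derived dimension bounds you identify --- at the eavesdropper, (30b) forces ${\rm{span}}({\bf G}_1{\bf V})$ into ${\rm{span}}({\bf G}_1)\cap{\rm{span}}({\bf H}_2)$ (with ${\rm{null}}\{{\bf G}_1\}$ supplying the $d_0$ ``free'' streams), and at the legitimate receiver, (30c) forces ${\rm{span}}({\bf H}_1{\bf V})$ and ${\rm{span}}({\bf G}_2{\bf W})$ into direct sum, giving $m+d_2\le N_b$ --- followed by a case analysis over the antenna regimes, made rigorous through the GSVD dimension identities (16). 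The paper merely packages the combination as a proof by contradiction over the four cases of Table~I (including a parity argument to handle the floor), whereas you run it as a forward derivation; the mathematical content is the same.

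Two caveats. First, your inequality (iii) in the form ``$d_2\le s$'' is not correct as a standalone statement. What the constraints actually give is the \emph{joint} bound $d_1+d_2={\rm{rank}}\{{\bf G}_1{\bf V}\}\le{\rm{dim}}\{{\rm{span}}({\bf G}_1)\cap{\rm{span}}({\bf H}_2)\}$, which generically splits as $d_1^\star+s$; this is precisely the bound the paper uses (its $s\le s_3+s_4$ step). If a feasible point uses less null-${\bf G}_2$ jamming than the Table~I construction, i.e.\ $d_1<d_1^\star$, then $d_2$ alone can exceed $s$. Your final conclusion survives because the combination only ever needs the sum $d_1+d_2$ in the $s$-limited regime, and the sum $d_0+d_1\le d_0^\star+d_1^\star$ together with $m+d_2\le N_b$ in the $\lfloor\cdot/2\rfloor$-limited regime, but you should restate (iii) as the joint bound. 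Second, the ``bookkeeping'' you explicitly defer is where essentially all of the paper's effort in Appendix~F goes; having verified that your three (corrected) inequalities do distill into (34)--(35) after the case check, I consider the plan sound, but as written it is an outline rather than a complete proof.
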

\begin{proof}
See Appendix \ref{appF}
\end{proof}

\emph{Remark:} According to Lemma 5, it is straight-forward
that the feasible solution $\{\hat{\bf V},\hat{\bf W}\}$
given in Table I is also the optimal solution to (30).

\begin{theorem}
\textit{Consider a helper-assisted MIMO Gaussian wiretap channel, as depicted in Fig.1, where the source, the
legitimate receiver, the eavesdropper and an external helper are equipped with $N_a$, $N_b$,
$N_e$ and $N_j$ antennas, respectively. The maximal achievable secure degrees of freedom
\begin{align}
s.d.o.f.=d^\star\textrm{,}
\end{align}
where $d^\star$ is given in (34).}
\end{theorem}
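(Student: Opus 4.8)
The plan is to obtain Theorem 2 as an immediate consequence of Lemma 4 and Lemma 5, which between them already carry all of the substantive work. First I would invoke Lemma 4, which asserts that the maximal achievable s.d.o.f., defined through the limit in (4), coincides with the optimal value $d$ of the auxiliary rank-maximization problem (30); this reduces the (analytically intractable) high-SNR scaling analysis of $C_s$ to a purely linear-algebraic subspace-alignment problem. Then I would invoke Lemma 5, which identifies $d$ with the closed-form quantity $d^\star$ of (34)--(35). Chaining the two equalities yields $s.d.o.f. = d = d^\star$, which is precisely the claim.

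It is worth spelling out that the equality $d = d^\star$ in Lemma 5 is itself proven in two directions, and both are used here. The achievability direction is constructive: the heuristic of Table~I produces, in each of the four exhaustive regimes on $(N_a,N_b,N_e,N_j)$, an explicit pair $\{\hat{\bf V},\hat{\bf W}\}$ that is feasible for (30), i.e. $\mathrm{span}({\bf G}_1\hat{\bf V})\subset\mathrm{span}({\bf H}_2\hat{\bf W})$ and $\mathrm{span}({\bf G}_2\hat{\bf W})\cap\mathrm{span}({\bf H}_1\hat{\bf V})=\{{\bf 0}\}$, with $\mathrm{rank}\{{\bf H}_1\hat{\bf V}\}=d^\star$, so that $d\ge d^\star$. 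The converse direction (Appendix~F) shows $d\le d^\star$: no feasible precoder pair can exceed $d^\star$ message dimensions at the legitimate receiver while still confining the message inside the jamming subspace at the eavesdropper. Combining $d^\star \le d \le d^\star$ closes the loop, and the limit characterization of Lemma~4 then transports the result to the secrecy capacity.

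As a by-product I would note that, since $\{\hat{\bf V},\hat{\bf W}\}$ attains the optimum of (30), the Remark after Lemma~5 together with the achievability half of Lemma~4 shows that the closed-form pair $\{\hat{\bf V},\hat{\bf W}\}$ of Table~I is an s.d.o.f.-optimal design for the original SRM problem (3): transmitting with source covariance proportional to $\hat{\bf V}\hat{\bf V}^H$ and jamming covariance proportional to $\hat{\bf W}\hat{\bf W}^H$ and letting $P\to\infty$ achieves the secure multiplexing gain $d^\star$.

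There is essentially no separate obstacle at the level of Theorem~2 itself; the difficulty lives entirely in the two lemmas it rests on. If forced to name the hard part, it is the converse in Lemma~5 (bounding $d$ from above uniformly across the four antenna regimes via the GSVD dimension count of (15)--(17)) and, in Lemma~4, the argument of Appendix~E that $\mathrm{rank}\{{\bf H}_1{\bf V}\}$ under the alignment constraints of (30) is exactly the pre-log of $C_s$ rather than merely a lower bound. Theorem~2 only has to quote these.
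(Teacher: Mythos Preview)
Your proposal is correct and follows exactly the paper's own argument: Theorem~2 is obtained by chaining Lemma~4 ($s.d.o.f.=d$) with Lemma~5 ($d=d^\star$), with all substantive work residing in those lemmas. Your additional remarks on the two directions of Lemma~5 and on $\{\hat{\bf V},\hat{\bf W}\}$ being s.d.o.f.-optimal are accurate and match the paper's Remark and Corollary~1.
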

\begin{proof}
Combining Lemma 4 and Lemma 5, it is clear that $s.d.o.f.=d^\star$. This completes the proof.
\end{proof}

\newtheorem{corollary}{Corollary}
\begin{corollary}
The feasible point $\{\hat{\bf V},\hat{\bf W}\}$ for the optimization problem of (30),
given in Table I, serves as a s.d.o.f.-optimal solution to the original SRM problem in (3).
It achieves the maximal s.d.o.f..
Moreover, Table II clarifies the maximal achievable s.d.o.f. of
a helper-assisted MIMO Gaussian wiretap channel,
and reveals its specific connection to the number of antennas at each terminal.
\end{corollary}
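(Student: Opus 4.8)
The plan is to assemble the corollary from Lemma~4, Lemma~5 and Theorem~2 together with the explicit construction of Table~I; the only genuinely new ingredient is the check that the specific feasible point $\{\hat{\bf V},\hat{\bf W}\}$, once power has been loaded onto it, attains the maximal s.d.o.f. in the original SRM problem~(3). First I would record, via the discussion preceding Lemma~5 and the Remark following it, that $\{\hat{\bf V},\hat{\bf W}\}$ is feasible to~(30) and attains ${\rm rank}\{{\bf H}_1\hat{\bf V}\}=d=d^\star$; in particular ${\rm span}({\bf G}_1\hat{\bf V})\subset{\rm span}({\bf H}_2\hat{\bf W})$ and ${\rm span}({\bf G}_2\hat{\bf W})\cap{\rm span}({\bf H}_1\hat{\bf V})=\{{\bf 0}\}$.

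Next I would turn this feasible point into a transmit strategy for~(3): after normalizing the columns of $\hat{\bf V}$ and $\hat{\bf W}$, distribute the total power $P$ across them (for concreteness, equally), set ${\bf V}=\sqrt{\mu P}\,\hat{\bf V}$, ${\bf W}=\sqrt{\nu P}\,\hat{\bf W}$ with $\mu,\nu>0$ and $\mu\,{\rm rank}\{\hat{\bf V}\}+\nu\,{\rm rank}\{\hat{\bf W}\}\le 1$, and take ${\bf Q}_a={\bf V}{\bf V}^H$, ${\bf Q}_j={\bf W}{\bf W}^H$, so that the trace constraint in~(3) holds. I would then evaluate $R_d$ and $R_e$ in~(2) exactly as in the achievability (sufficiency) part of the proof of Lemma~4 in Appendix~\ref{appE}: the subspace inclusion ${\rm span}({\bf G}_1\hat{\bf V})\subset{\rm span}({\bf H}_2\hat{\bf W})$ forces $R_e$ to stay bounded as $P\to\infty$, while the disjointness of the message and jamming subspaces at the legitimate receiver makes $R_d$ grow like ${\rm rank}\{{\bf H}_1\hat{\bf V}\}\,{\rm log}\,P=d^\star\,{\rm log}\,P$. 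Hence $C_s\ge R_d-R_e\ge d^\star\,{\rm log}\,P-O(1)$, so this scheme achieves $s.d.o.f.\ge d^\star$; since Theorem~2 gives $s.d.o.f.=d^\star$ as a matching upper bound, the scheme built from $\{\hat{\bf V},\hat{\bf W}\}$ is s.d.o.f.-optimal, which is the first assertion of the corollary.

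It remains to justify the ``moreover'' clause, namely that Table~II correctly tabulates~(34)--(35). This is a deterministic case split: in each row I would substitute the stated inequalities among $N_a,N_b,N_e,N_j$ into $d_0^\star=(N_a-N_e)^+$, $d_1^\star=(\min\{N_a,N_e\}+(N_j-N_b)^+-N_e)^+$ and $d_2^\star=\min\{s,(\lfloor(N_b-d_0^\star-d_1^\star)/2\rfloor)^+\}$, simplify the nested $\min$/$\max$/$\lfloor\cdot\rfloor$ expressions, and verify that $\min\{d_0^\star+d_1^\star+d_2^\star,N_a,N_b\}$ collapses to the listed entry. For example, $N_a\ge N_e+N_b$ gives $d_0^\star\ge N_b$, hence $d^\star=\min\{N_a,N_b\}=N_b$; $N_j\ge N_e+N_b$ gives $d_1^\star\ge N_b$ similarly; and the intermediate regimes reduce $d_2^\star$ to the $\min\{s,\lfloor\cdot/2\rfloor\}$ forms shown, with $s$ specialising to the quantity printed in that row.

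I expect the main obstacle to be twofold. First, making the power-loading argument rigorous --- in particular the boundedness of $R_e$ --- which is precisely the content of the achievability direction of Lemma~4, so I would invoke Appendix~\ref{appE} rather than reprove it. Second, the bookkeeping of the Table~II case analysis: one must track the $(\cdot)^+$ truncations and the floor functions carefully so that boundary cases (equalities among the antenna counts) fall into the correct row, and confirm that the $N_a$ and $N_b$ caps in~(34) are active exactly where Table~II indicates.
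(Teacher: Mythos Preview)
Your proposal is correct and follows the same logical chain as the paper, but the paper's own proof is a single sentence: ``With Theorem~2, it is straight-forward to arrive at these conclusions.'' The extra work you plan (the explicit power-loading argument and the row-by-row verification of Table~II) is already absorbed into Lemma~4, the Remark after Lemma~5, and the derivation of Table~II from~(34)--(35), so the paper treats Corollary~1 as immediate.
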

\begin{proof}
With Theorem 2, it is straight-forward to arrive at these conclusions.
\end{proof}

\begin{corollary}
When $N_b >1 $, the maximal achievable s.d.o.f. of
a helper-assisted MIMO Gaussian wiretap channel is zero if and
only if $N_e \ge N_a+N_j$. When $N_b =1 $, the maximal achievable s.d.o.f. of
a helper-assisted MIMO Gaussian wiretap channel is zero if and
only if $N_e \ge N_a+N_j -1$.
\end{corollary}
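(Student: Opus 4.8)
The plan is to read both equivalences off Theorem~2, which gives $s.d.o.f.=d^\star=\min\{d_0^\star+d_1^\star+d_2^\star,N_a,N_b\}$ with $d_0^\star,d_1^\star,d_2^\star$ as in (35). Since every terminal carries at least one antenna, $N_a\ge1$ and $N_b\ge1$, so $d^\star=0$ if and only if $d_0^\star+d_1^\star+d_2^\star=0$; as the three terms are nonnegative this is the same as $d_0^\star=d_1^\star=d_2^\star=0$. The corollary therefore reduces to a short case analysis determining when each of these three quantities vanishes.

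First I would note $d_0^\star=(N_a-N_e)^+=0\Leftrightarrow N_a\le N_e$. Under that hypothesis $\min\{N_a,N_e\}=N_a$, so $d_1^\star=(N_a+(N_j-N_b)^+-N_e)^+$, which vanishes iff $N_a+(N_j-N_b)^+\le N_e$. Assuming in addition $d_1^\star=0$ (so $d_0^\star+d_1^\star=0$), the quantity $s$ accompanying (35) becomes $s=N_a+\min\{N_j,N_e\}-\min\{N_a+N_j,N_e\}$; a short split shows $s=0$ exactly when $N_e\ge N_a+N_j$ (both minima collapse), whereas if $N_a+N_j>N_e$ one gets $s=N_a+N_j-N_e>0$ for $N_j\le N_e$ and $s=N_a>0$ for $N_j>N_e$. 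In this regime $d_2^\star=\min\{s,\lfloor N_b/2\rfloor\}$, the floor being already nonnegative.

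It then remains to split on $N_b$. If $N_b>1$ then $\lfloor N_b/2\rfloor\ge1$, so under $d_0^\star=d_1^\star=0$ we have $d_2^\star=0\Leftrightarrow s=0\Leftrightarrow N_e\ge N_a+N_j$; conversely $N_e\ge N_a+N_j$ already forces $N_a\le N_e$ (as $N_j\ge1$) and $N_a+(N_j-N_b)^+\le N_a+N_j\le N_e$, i.e.\ $d_0^\star=d_1^\star=0$. Hence for $N_b>1$, $d^\star=0\Leftrightarrow N_e\ge N_a+N_j$. If $N_b=1$ then $\lfloor N_b/2\rfloor=0$, so $d_2^\star=0$ automatically whenever $d_0^\star=d_1^\star=0$, giving $d^\star=0\Leftrightarrow d_0^\star=d_1^\star=0\Leftrightarrow N_a+(N_j-1)^+\le N_e\Leftrightarrow N_e\ge N_a+N_j-1$; equivalently, since $d^\star\le N_b=1$ forces $s.d.o.f.\in\{0,1\}$ when $N_b=1$, this is just Theorem~1 restated.

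The only mildly delicate point is the identity $s=0\Leftrightarrow N_e\ge N_a+N_j$, together with noticing that this single inequality simultaneously kills $d_0^\star$ and $d_1^\star$; everything else is bookkeeping with $(\cdot)^+$ and a floor. I expect no real obstacle beyond handling the $N_b=1$ boundary, where the floor quietly removes the $d_2^\star$ contribution and the threshold drops to $N_a+N_j-1$.
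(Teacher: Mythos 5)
Your proof is correct and follows essentially the same route as the paper: both reduce the corollary to a case analysis of the closed-form expression for $d^\star$ given by Theorem 2. The only difference is that you work directly from the decomposition $d^\star=\min\{d_0^\star+d_1^\star+d_2^\star,N_a,N_b\}$ in (34)--(35), characterizing when each of $d_0^\star,d_1^\star,d_2^\star$ vanishes, whereas the paper's Appendix G routes through Table II and isolates the single row of that table that can yield zero; your version has the minor advantage of not leaning on the (only sketched) derivation of Table II from (34)--(35).
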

\begin{proof}
See Appendix \ref{appG}
\end{proof}

%\subsection{suboptimal solution to maximize the secrecy rate}

\section{Numerical Results}
In this section, we give numerical results to show the secrecy rate performance
of the proposed schemes and validate our theoretical findings.
All channels are assumed to be
quasi-static flat Rayleigh fading and independent of each other,
with entries distributed as $\mathcal{CN}(0,1)$.
The noise vector at each receiver is assumed to be AWGN, with i.i.d. entries
distributed as $\mathcal{CN}(0,1)$. In each figure, details on the number of
antennas at each terminal will be depicted.
Results are averaged over 1000 independent channel trials.

We first test the secrecy rate performance of our proposed schemes and compare
them with the existing method.

Fig.2 illustrates the secrecy rate performance of the
single-antenna legitimate receiver case.
The lines labeled as \textquotedblleft Optimal Scheme\textquotedblright \
and \textquotedblleft Alignment Scheme\textquotedblright \ illustrate the
secrecy rate performance of $C_s$ and $C_s^{\rm{sub}}$ in (8) and (20), respectively.
The line labeled as \textquotedblleft ZF Scheme\textquotedblright \cite{LunDong10}
gives the secrecy rate achieved by the
scheme which completely nulls out the jamming signal
at the legitimate receiver and matches the message signal with the source-receiver (legitimate) channel.
It shows that both $C_s$ and $C_s^{\rm{sub}}$
increase linearly with SNR. In contrast, there exists a performance ceiling
on the secrecy rate achieved by the ZF Scheme.
\begin{figure}[!t]
\centering
\includegraphics[width=3in]{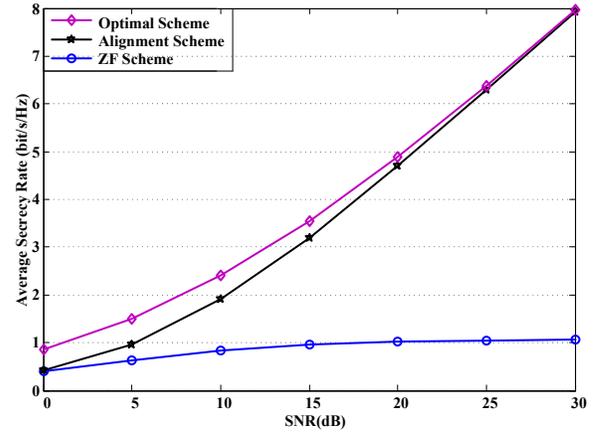}
 %where an .eps filename suffix will be assumed under latex,
% and a .pdf suffix will be assumed for pdflatex; or what has been declared via
\DeclareGraphicsExtensions. \caption{Average secrecy rate versus SNR, $N_a=N_e=3$, $N_j=2$, $N_b=1$}
\vspace* {-6pt}
\end{figure}

Fig. 3 illustrates the secrecy rate performance of the multi-antenna legitimate receiver
case. The bar labeled as \textquotedblleft Alignment Scheme\textquotedblright \
shows the secrecy rate result of our proposed heuristic method.
In such case, closed-form precoding matrices $\{\hat{\bf V},\hat{\bf W}\}$ are
given in Table I and the total power is equally distributed over all message signal
streams and jamming signal streams.
The bar labeled as \textquotedblleft Gauss-Seidel Approach\textquotedblright \ shows the secrecy rate performance of
our proposed iterative algorithm in Section IV. \emph{A}.
As stated in Corollary 1, $\{\hat{\bf V},\hat{\bf W}\}$
given in Table I, actually acts as a s.d.o.f.-optimal solution to the original SRM problem in (3).
So the initial point is set as the closed-form solutions in the Alignment Scheme.
For comparison, Fig. 3 also plots the secrecy rate performance of
the method proposed in \cite{Swindlehurst11}, wherein
the secrecy rate maximization method is derived under a power
covariance constraint. Thus, to find the maximal achievable secrecy rate under an average
power constraint, we have to solve [5, equation (41)]\cite{Liu10}
\begin{align}
R_s(P) = \mathop {\max }\limits_{  {\bf {S}}\succeq {\bf{0}},{\textrm{tr}}\{{\bf {S}} \}\le P}
R_s({\bf {S}}).  \nonumber
\end{align}
That is, numerical search over the power covariance matrix ${\bf {S}}$ is performed to compute $R_s(P)$.
Since such numerical search is based on random choices of ${\bf {S}}$, it is difficult to decide when to
stop it. To deal with this issue, we first determine the run time
of our Gauss-Seidel based algorithm of Section IV. \emph{A},
which terminates when the relative secrecy rate
improvement between two adjacent iterations
is less than $10^{-2}$. We then run the algorithm
proposed in \cite{Swindlehurst11} for the same run time.
It is worthwhile to note that our proposed Alignment Scheme has closed-form solutions,
so it is the most computationally inexpensive scheme.
Fig. 3 shows that, with the same run time, our proposed Gauss-Seidel based algorithm achieves higher secrecy rate
than the algorithm proposed in \cite{Swindlehurst11}.
More encouragingly, it can be seen that the Alignment Scheme achieves
nearly the same secrecy rate as the Gauss-Seidel Approach based algorithm
in the high SNR regime.
This is consistent with the fact that the Alignment Scheme is s.d.o.f.-optimal, thus
near-optimal at high SNR.
Besides, it can be seen that the Alignment Scheme achieves higher secrecy rate than
the algorithm proposed in \cite{Swindlehurst11} in the high
and medium SNR regimes.
To gain more insight into the Gauss-Seidel based algorithm,
Fig. 4 plots the convergence of it.
Results show that our proposed Gauss-Seidel Approach converges very fast
and stabilizes after several loops.
\begin{figure}[!t]
\centering
\includegraphics[width=3in]{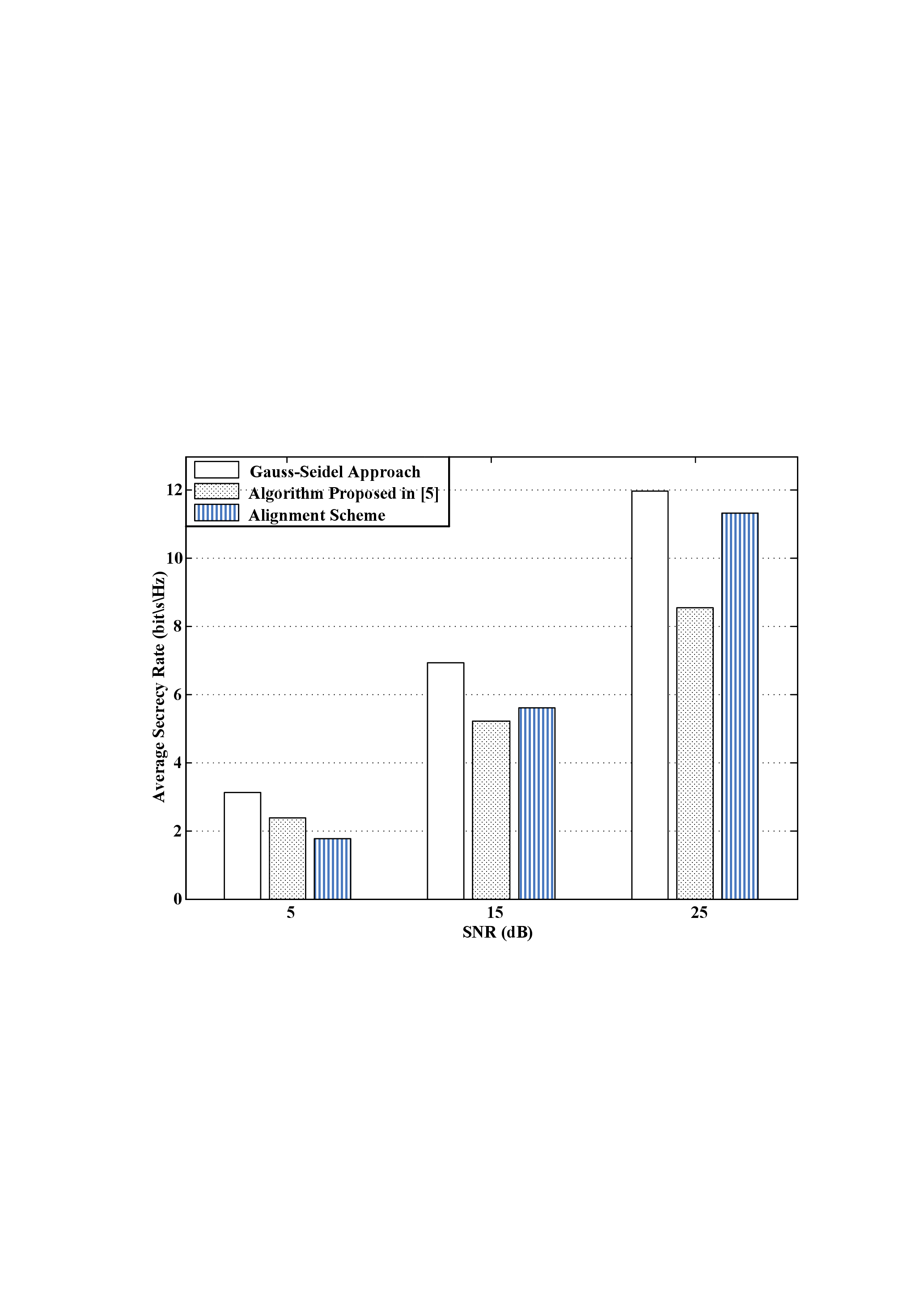}
 %where an .eps filename suffix will be assumed under latex,
% and a .pdf suffix will be assumed for pdflatex; or what has been declared via
\DeclareGraphicsExtensions. \caption{Average secrecy rate versus SNR, $N_a=N_b=3$, $N_j=4$, $N_e=3$}
\vspace* {-6pt}
\end{figure}

\begin{figure}[!t]
\centering
\includegraphics[width=3in]{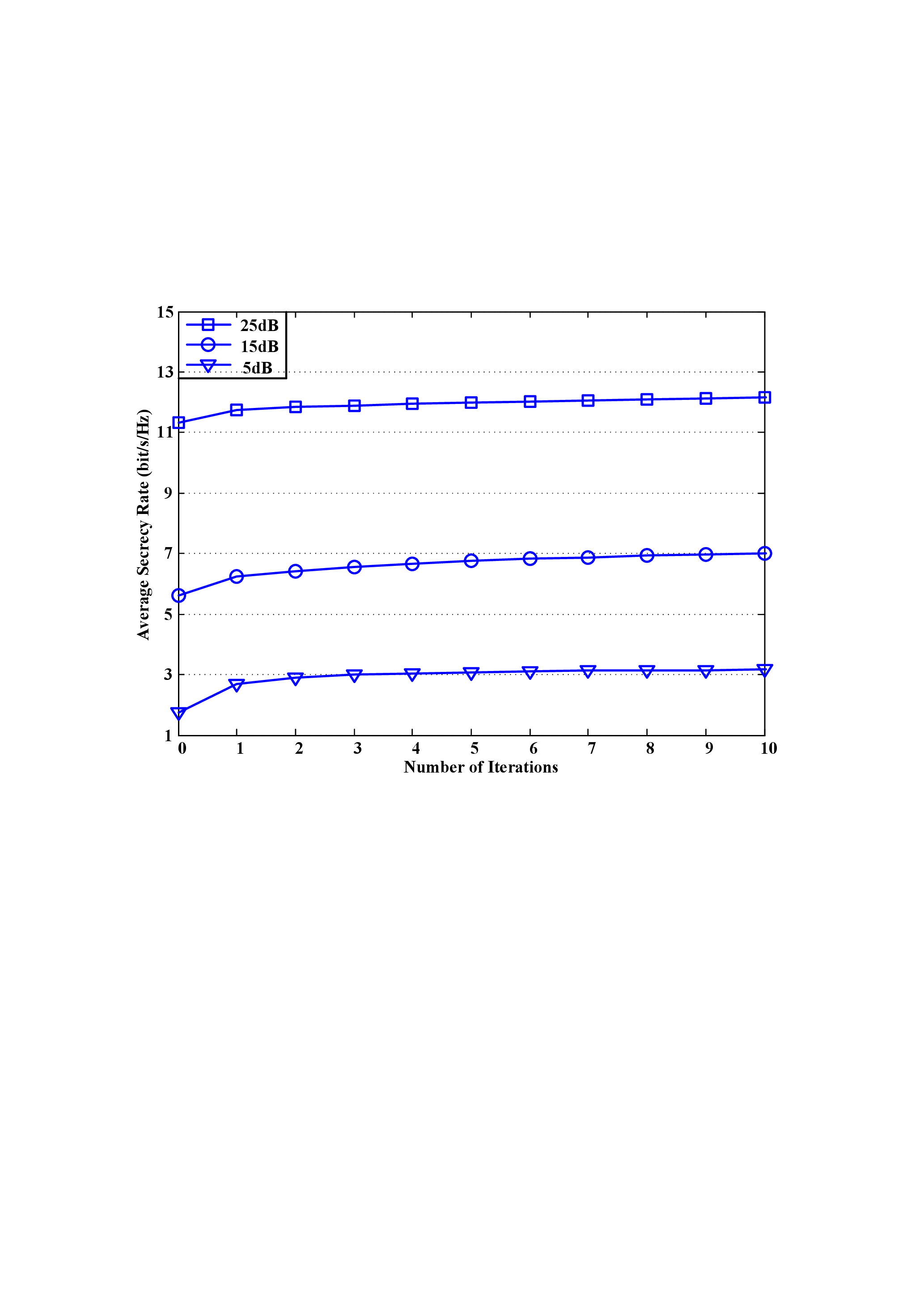}
 %where an .eps filename suffix will be assumed under latex,
% and a .pdf suffix will be assumed for pdflatex; or what has been declared via
\DeclareGraphicsExtensions. \caption{Average secrecy rate versus the number of iterations, $N_j=4$, $N_a=N_b=3$, $N_e=3$}
\vspace* {-6pt}
\end{figure}

\begin{figure}[!t]
\centering
\includegraphics[width=3in]{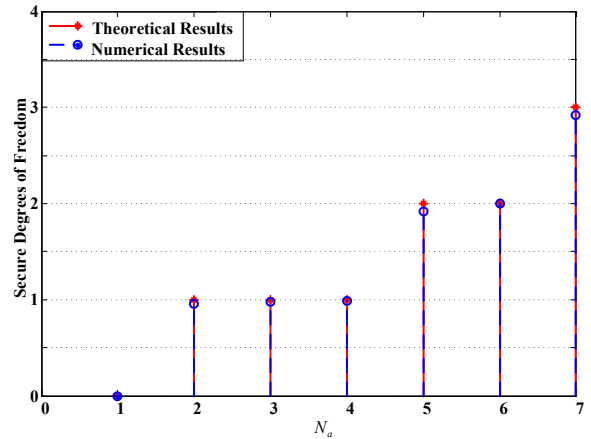}
 %where an .eps filename suffix will be assumed under latex,
% and a .pdf suffix will be assumed for pdflatex; or what has been declared via
\DeclareGraphicsExtensions. \caption{$s.d.o.f.$ versus $N_a$, $N_j=3$, $N_b=3$, $N_e=4$}
\vspace* {-6pt}
\end{figure}

We then test the achievable s.d.o.f. performance for the helper-assisted
MIMO wiretap channel and validate the theoretical results of Section IV. \emph{B}.

In Fig. 5, the stem labeled as \textquotedblleft Theoretical Results\textquotedblright \
shows the theoretical maximal achievable
s.d.o.f. according to Table II.
The stem labeled as \textquotedblleft Numerical Results\textquotedblright \
shows the s.d.o.f. achieved by the proposed Alignment Scheme.
In the proposed Alignment Scheme, closed-form precoding matrices $\{\hat{\bf V},\hat{\bf W}\}$ are
given in Table I and the total power is equally distributed over all message signal
streams and jamming signal streams. The total power $P$ is set as 50dB.
For each channel trial, we substitute the closed-from solution into
(3) and compute the secrecy rate $C_s^{\rm o}$. We then compute the s.d.o.f. as
the rate at which the secrecy rate ${C_s^{\rm o}}$ scales with ${{\rm log} \ P}$, i.e.,
$ {C_s^{\rm o}}/{{\rm log} \ P}$.
It can be seen that the theoretical results almost coincide with the numerical results.

Fig. 6 and Fig. 7 plot the maximal achievable s.d.o.f. for
the helper-assisted MIMO Gaussian wiretap channel under various antenna configurations,
according to Table II.
Results show that for the case of single-antenna legitimate receiver,
the maximal achievable s.d.o.f. is zero if and
only if $N_e \ge N_a+N_j-1$, while for the case of multi-antenna legitimate receiver,
the maximal achievable s.d.o.f. is zero if and
only if $N_e \ge N_a+N_j$. Further, it is illustrated that the maximal achievable s.d.o.f.
benefits from the increasing number of antennas at any of the three terminals, i.e., the source,
the legitimate receiver and the external helper.
These results are consistent with
the theoretical findings of Section IV. \emph{B}.

\begin{figure}[!t]
\centering
\includegraphics[scale=0.48]{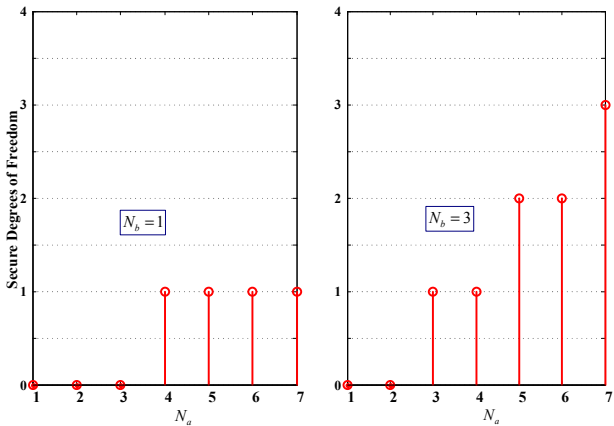}
 %where an .eps filename suffix will be assumed under latex,
% and a .pdf suffix will be assumed for pdflatex; or what has been declared via
\DeclareGraphicsExtensions. \caption{$s.d.o.f.$ versus $N_a$, $N_j=2$, $N_e=4$}
\vspace* {-6pt}
\end{figure}

\begin{figure}[!t]
\centering
\includegraphics[scale=0.48]{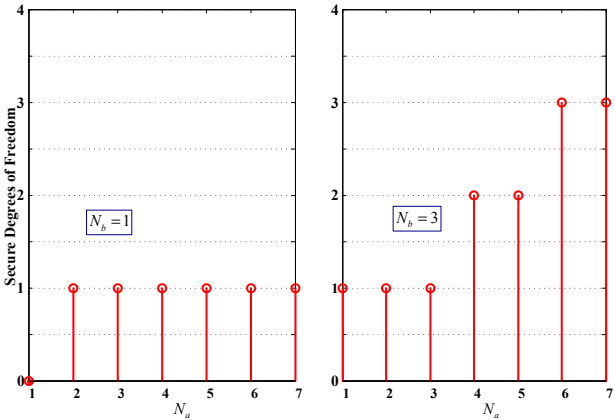}
 %where an .eps filename suffix will be assumed under latex,
% and a .pdf suffix will be assumed for pdflatex; or what has been declared via
\DeclareGraphicsExtensions. \caption{$s.d.o.f.$ versus $N_a$, $N_j=4$, $N_e=4$}
\vspace* {-6pt}
\end{figure}

\section{Conclusion}
We have studied the secrecy capacity of the MIMO Gaussian wiretap channel,
where a multi-antenna external helper is available.
For the special case of single-antenna legitimate receiver, we have obtained the secrecy capacity
using a combination of convex optimization and one-dimensional search. For the case of multi-antenna
legitimate receiver, we have reformulated the original nonconvex SRM problem into several convex subproblems.
By doing so, we have been able to provide an iterative algorithm, which attains a fairly good
secrecy rate performance. In addition, we have addressed
the s.d.o.f. maximization analytically.
Specifically, we have obtained an analytical s.d.o.f.-optimal solution to the
original SRM problem, based on which, we have obtained
the maximal achievable s.d.o.f. in closed-form. These results uncovered the connection between
the maximal achievable s.d.o.f. and the system parameters, thus
%revealing its connection to system parameters, and
shedding light on how the secrecy capacity of a helper-assisted MIMO Gaussian
wiretap channel behaves.
%It has been shown that, for the special case of single-antenna legitimate receiver,
%a s.d.o.f. equal to 1 can be achieved
%if and only if $N_e<N_a+N_j-1$, while for the general case of multi-antenna
%legitimate receiver, the maximal achievable s.d.o.f. is zero if and only if $N_e \ge N_a+N_j$.
Numerical results have validated the theoretical findings and confirmed the efficacy of our proposed schemes.

\appendices
\section{Proof of Proposition 1} \label{appA}
The associated Lagrangian of (13) is
\begin{align}
\mathcal{L}= & {\rm{tr}} \{ {\bf{Q}}_a + {\bf{Q}}_j\}
+\mu [f(\tau)({{1 + {{\bf{g}}_2}{\bf{Q}}_j{\bf{ g}}_2^H}})-{{{{\bf{h}}_1}{\bf{Q}}_a{\bf{h}}_1^H}}]
\nonumber \\
& + {\rm {tr}}\{{\bf Z}_1[{{\bf{G}}_1}{\bf Q}_a{\bf{G}}_1^H -\tau ( {\bf{I}} + {{\bf{H}}_2}{\bf Q }_j{\bf{ H}}_2^H)] \}
\nonumber \\
& - {\rm{tr}}\{ {\bf Z}_a{{\bf{  Q}}}_a\}- {\rm{tr}}\{ {\bf Z}_j{{\bf{  Q}}}_j\}\textrm{,}
\end{align}
where ${\bf{Z}}_1$, ${\bf{Z}}_a$, ${\bf{Z}}_j$ and $\mu$ are dual variables associated with
the inequalities in (13).
The optimization problem of (13) is convex with part of the Karush-Kuhn-Tucker (KKT) conditions as follows:
\begin{subequations}
\begin{align}
&{\bf{Z}}_a = {\bf I}-\mu{\bf{h}}_1^H{{\bf{h}}_1}+{\bf{G}}_1^H{\bf Z}_1{{\bf{G}}_1}\\
&{\bf{Z}}_a{\bf{Q}}_a= {\bf 0} \\
&{\bf{Z}}_1 \succeq  {\bf{0}}, {\bf{Z}}_a \succeq {\bf{0}}, \mu \ge 0.
\end{align}
\end{subequations}
Substituting (38a) into (38b), we arrive at
\begin{align}
({\bf I}+{\bf{G}}_1^H{\bf Z}_1{{\bf{G}}_1}){\bf{Q}}_a=\mu{\bf{h}}_1^H{{\bf{h}}_1}{\bf{Q}}_a.
\end{align}
Since ${\bf I}+{\bf{G}}_1^H{\bf Z}_1{{\bf{G}}_1} \succ {\bf 0}$, so ${\rm {rank}}\{{\bf{Q}}_a\}={\rm {rank}}\{({\bf I}+{\bf{G}}_1^H{\bf Z}_1{{\bf{G}}_1}){\bf{Q}}_a\}
={\rm {rank}}\{\mu{\bf{h}}_1^H{{\bf{h}}_1}{\bf{Q}}_a\}\le 1$.
In addition, ${\rm{rank}}\{{{\bf{Q}}_a}\} =0$ implies that ${{\bf{Q}}_a}={\bf{0}}$,
which contradicts the positive secrecy rate requirement.
Thus, ${\rm{rank}}\{{\bf{Q}}_a\} =1$.
This completes the proof.

\section{Proof of Proposition 2} \label{appB}
For notational simplicity, let
\begin{align}
\Phi ({{\bf{Q}}_a},{{\bf{Q}}_j})={{{{\bf{h}}_1}{\bf{Q}}_a{\bf{h}}_1^H}}/({{1 + {{\bf{g}}_2}{\bf{Q}}_j{\bf{ g}}_2^H}}).
\nonumber
\end{align}
Denote the optimal solution to (11) as $\{ {{\bar{\bf{Q}}}_a},{{\bar{\bf{Q}}}_j} \}$.
Apparently, the point $\{ {{\bar{\bf{Q}}}_a},{{\bar{\bf{Q}}}_j} \}$ is also
feasible to (13). Thus we have
\begin{align}
\textrm{tr}\{ {\hat{\bf{Q}}}_a  + {{\hat{\bf{Q}}}_j}\} \le \textrm{tr}\{ {\bar{\bf{Q}}}_a  + {{\bar{\bf{Q}}}_j}\}
\le P\textrm{,}
\nonumber
\end{align}
which indicates the point $\{ {{\hat{\bf{Q}}}_a},{{\hat{\bf{Q}}}_j} \}$ is feasible
to (11). According to the definition of $ f(\tau )$, we see
$\Phi({\hat{\bf{Q}}_a},{\hat{\bf{Q}}_j})\le f(\tau )$.
Moreover, from (13), $\Phi({\hat{\bf{Q}}_a},{\hat{\bf{Q}}_j})\ge f(\tau )$.
Thus,
\begin{align}
\Phi({\hat{\bf{Q}}_a},{\hat{\bf{Q}}_j})= f(\tau )\textrm{,}
\nonumber
\end{align}
which implies that the point $\{ {{\hat{\bf{Q}}}_a},{{\hat{\bf{Q}}}_j} \}$
is also the optimal solution to (11).
This completes the proof.

\section{Proof of the equalities in (16)} \label{appC}
Apparently, by the definition of (15a), (16a) holds true.

According to \cite{Horn85}, for any given matrix of $\bf A$,
it holds that
\begin{align}
{\rm {span}}({\bf A}^H)={\rm {null}}({\bf A})^ \perp.
\end{align}
Applying (40) to (15b), we get
\begin{align}
& {\rm {span}}({\bf H})^\perp  \cap {\rm {span}}({\bf G}) \nonumber
={\rm {null}}({\bf H}^H) \cap {\rm {null}}({\bf G}^H)^\perp \nonumber \\
& ={\rm {null}}({\bf H}^H)/[{\rm {null}}({\bf H}^H) \cap {\rm {null}}({\bf G}^H)]  \nonumber \\
& ={\rm {null}}({\bf H}^H)/{\rm {null}}([{\bf H}^H,{\bf G}^H]^T).
\end{align}
In addition, ${\rm {null}}([{\bf H}^H,{\bf G}^H]^T) \subset {\rm {null}}({\bf H}^H)$ by definition,
so we have $p={\rm {dim}}\{ {\rm {null}}({\bf H}^H)\}-{\rm {dim}}\{{\rm {null}}([{\bf H}^H,{\bf G}^H]^T) \}
=N-\min\{M,N\}-(N-k)=k-\min\{M,N\}$.

Similarly, we can prove (16b)-(16d).
This completes the proof.

\section{Proof of the lower bound on $C_s$ in (24)} \label{appD}
The optimization problem of (22) can be solved by resorting to carefully mathematical deductions.
Let $y = 1 + cP + (b - c)x$, $\alpha=\textrm{min}\{b,c\}$ and $\beta=\textrm{max}(b,c)$.
Then, $y \in [1 + \alpha P,1 + \beta P]$.
Substituting $x=(y-(1+cP))/(b-c)$ into (23) and making some mathematical transformations yield
\begin{align}
\eta(y) = \kappa  - \left( {Ay + B/{y}} \right)\textrm{,}
\end{align}
in which $\kappa  = \dfrac{{a(1 + cP) + c- b }}{{c - b}} + \dfrac{{[2a(1 + cP)+ c - b ]b}}{{{{(c - b)}^2}}}$,
$A = \dfrac{{ac}}{{{{(c - b)}^2}}}$ and
$B = \dfrac{{b(1 + cP)[a(1 + cP) + c- b ]}}{{{{(c - b)}^2}}}$.
%in which $\kappa  = \dfrac{{a(1 + cP) + c- b }}{{c - b}} + \dfrac{{[2a(1 + cP)+ c - b ]b}}{{{{(c - b)}^2}}}$,
%$A = \dfrac{{ac}}{{{{(c - b)}^2}}}$ and \\
%$B = \dfrac{{b(1 + cP)[a(1 + cP) + c- b ]}}{{{{(c - b)}^2}}}$.

Resorting to (42), the optimization problem of (22) can be transformed into a new optimization problem
that searches for $y$ as follows:
\begin{align}
{\eta_{\max }}\triangleq\mathop {\max }\limits_{1 + \alpha P \le y \le 1 + \beta P}{\rm{   }} \kappa  - \left( {Ay + B/ {y}} \right).
%\nonumber\\
%{{\rm{s}}{\rm{.t}}{\rm{.}}} & \quad   1 + \alpha P \le y \le 1 + \beta P
\end{align}

\begin{enumerate}
\item For the case of $a(1 + cP) > b - c$, $B>0$. Thus, $\eta(y)$
is increasing in $y$ when $y<y_0$, and decreasing in $y$ when $y>y_0$.
Herein, $y_0=\sqrt {B/A}=\sqrt {{{b(1 + cP)[a(1 + cP) - b + c]} \mathord{\left/
 {\vphantom {{b(1 + cP)[a(1 + cP) - b + c]} {(ac)}}} \right.
 \kern-\nulldelimiterspace} {(ac)}}}  $.
Therefore, if $1 + \alpha P \le y_0 \le 1 + \beta P$, ${\eta_{\max }} = \kappa - 2\sqrt{AB}$.
Otherwise, $\eta(y)$ achieves its maximal value at the two endpoints $(y=1 + \alpha P$ or $y=1 + \beta P)$,
and ${\eta_{\max }} = \textrm{max}\{(1+aP)/(1+bP),1\}$.

\item For the case of $a(1 + cP) \le b - c$, $B\le0$. Thus, $\eta(y)$ decreases
 monotonically with respect to $y$. It achieves its maximal value at the two endpoints,
 and ${\eta_{\max }} = \textrm{max}\{(1+aP)/(1+bP),1\}$.

%\item Since the secrecy rate is always nonnegative, so ${f_{\max }} \le 1$.
\end{enumerate}

Summarizing, if $1 + \alpha P \le y_0 \le 1 + \beta P$, then ${\eta_{\max }} =  \kappa - 2\sqrt{AB}$ and
\begin{align}
%\nonumber\\
%=& c(1+aP)\frac{{c + b - 2\sqrt {bc} \sqrt {1 + \dfrac{{c - b}}{{a(1 + cP)}}} }}{{{{(c - b)}^2}}}
%\nonumber\\
%&+ \frac{{c(a -b) + 2(c-a)\sqrt {bc} \sqrt {1 + \dfrac{{c - b}}{{a(1 + cP)}}} }}{{{{(c - b)}^2}}}
%\nonumber\\
C_s^{\textrm{sub}} =& {\rm log} ( \frac {c(1+aP)[c + b - 2\sqrt {bc} \sqrt {1 + ({c - b})/{(a + acP)}} ]}{{{{(c - b)}^2}}}
\nonumber\\
&+ \frac{{c(a -b) + 2(c-a)\sqrt {bc} \sqrt {1 + ({c - b})/{(a + acP)}} }}{{{{(c - b)}^2}}} )\textrm{,}
\end{align}
where the optimal solution
\begin{align}
& x^\star= (y_0-(1+cP))/(b-c)
\nonumber\\
&= \dfrac{\sqrt{{{b(1 + cP)[a(1 + cP) - b + c]}/{(ac)}}} -(1+cP)}{b-c}.
\end{align}
Otherwise, when $a > b$, ${\eta_{\max }} =(1+aP)/(1+bP)$ and $C_s^{\textrm{sub}} ={\rm log}(1+aP)/(1+bP)$, where the optimal
solution $x^\star=P$; when $a\le b$, ${\eta_{\max }} =1$ and $C_s^{\textrm{sub}}=0$, where the optimal solution $x^\star=0$.
In the sequel, we refer to these two solutions $x^\star=0$ and $x^\star=P$ as the trivial solutions.

Consider the nontrivial solution of (45). When
the total transmit power $P$ is big enough,
\begin{align}
C_s^{\textrm{sub}} \approx&
%{\rm log} c(1+aP)\frac{{c + b - 2\sqrt {bc}  }}{{{{(c - b)}^2}}}
%\nonumber\\
%=&
{\rm log}(aP)-2 {\rm log}  (1+\sqrt{b/c}).
\end{align}
This completes the proof.

\section{Proof of Lemma 4} \label{appE}
Before proceeding, we first give two critical properties on matrix
that will be used in the following analyses. That is, for any given matrices $\bf A$ and $\bf B$,
if $\bf B$ is invertible, then
\begin{align}
&{\rm {span}}({\bf A})={\rm {span}}({\bf A}{\bf B})    \\
& {\rm {rank}}\{{\bf A}\}={\rm {rank}}\{{\bf A}{\bf B}\}.
\end{align}
Firstly, ${\rm {span}}({\bf A})={\rm {span}}({\bf A}{\bf B}{\bf B}^{-1}) \subset {\rm {span}}({\bf A}{\bf B})$.
Secondly, ${\rm {span}}({\bf A}) \supset{\rm {span}}({\bf A}{\bf B})$.
Therefore, the equality (47) holds true. With (47), it is clear that the equality (48) holds true.

Given an arbitrary point of $\{{\bf V},{\bf W}\}$, with the definition in (4), we can re-express
the achieved s.d.o.f. as follows:
\begin{align}
h({\bf V},{\bf W})= {\rm {rank}}\{{\bf{H}}_1{\bf V}\}
-m({\bf V},{\bf W})-n({\bf V},{\bf W})\textrm{,}
\end{align}
in which $m({\bf V},{\bf W})= {\rm{dim}}\{{\rm{span}}({\bf{G}}_1{\bf{V}})/{\rm{span}}({\bf{H}}_2{\bf{W}})\} $
and $n({\bf V},{\bf W})= {\rm{dim}}\{{\rm{span( }}{{\bf{G}}_2}{\bf{W}}{\rm{) }} \cap {\rm{span}}({{\bf{H}}_1}{\bf{V}})\}$.

Assume that $\{\bar{\bf V},\bar{\bf W}\}$ is the optimal solution to (30), then we have
${\rm{span}}({\bf{G}}_1\bar{\bf{V}}) \subset {\rm{span}}({\bf{H}}_2\bar{\bf{W}})$ and
${\rm{span( }}{{\bf{G}}_2}\bar{\bf{W}}{\rm{) }} \cap {\rm{span}}({{\bf{H}}_1}\bar{\bf{V}}) =\{{\bf 0}\}$.
The achieved s.d.o.f. $h(\bar{\bf V},\bar{\bf W})={\rm {rank}}\{{\bf{H}}_1\bar{\bf V}\}=d$. In addition,
$s.d.o.f. \ge h(\bar{\bf V},\bar{\bf W})$ by definition. Thus, $s.d.o.f. \ge d$.
As such, to complete the proof of Lemma 4, we only need to prove $s.d.o.f. \le d$.
%In the sequel, we give the proof of $s.d.o.f. \le d$.
In the sequel, we show that, for any given point of $\{{\bf V},{\bf W}\}$, we can always
find another feasible point for the problem of (30), $\{{\bf V}^\prime,{\bf W}^\prime\}$,
such that $h({\bf V},{\bf W}) \le {\rm {rank}}\{{\bf{H}}_1{\bf V}^\prime\} \le d$, thus
giving the proof of $s.d.o.f. \le d$.

Without lose of generality, denote ${\bf V} \in {\mathbb {C}}^{N_a \times K_a}$
and ${\bf W} \in {\mathbb {C}}^{N_j \times K_j}$.
With the \emph{GSVD Transform} of $({\bf W}^H{\bf G}_2^H,{\bf V}^H{\bf H}_1^H)$,
we obtain unitary matrices $\hat{\bf\Psi}_1 \in {{\mathbb C} ^{K_j\times K_j}}$
and $\hat{\bf\Psi}_2 \in {{\mathbb C} ^{K_a\times K_a}}$,
non-negative diagonal matrices $\hat{\bf D}_1\in {{\mathbb C} ^{K_j\times k_5}}$
and $\hat{\bf D}_2\in {{\mathbb C} ^{K_a\times k_5}}$, and
a matrix $\hat{\bf X}\in {{\mathbb C} ^{N_r\times k_5}}$ with ${\rm{rank}}\{\hat{\bf X}\}=k_5$,
such that
\begin{subequations}
\begin{align}
& {\bf G}_2{\bf W} \hat{\bf\Psi}_1=\hat{\bf X}\hat{\bf D}_1^H \\
& {\bf H}_1{\bf V} \hat{\bf\Psi}_2=\hat{\bf X}\hat{\bf D}_2^H\textrm{,}
\end{align}
\end{subequations}
where ${k_5} = \min \{ {K_a} + {K_j},{N_r}\} $,
$r_5=k_5-\min \{K_a,N_r\}$ and
${s_5} = \min \{ {K_a},{N_r}\}  + \min \{ {K_j},{N_r}\}  - k_5$.

Let
\begin{subequations}
\begin{align}
&\hat{\bf\Psi}_1^1=\hat{\bf\Psi}_1(:,r_5+1:r_5+s_5) \\
&\hat{\bf\Psi}_1^0=[\hat{\bf\Psi}_1(:,1:r_5), \hat{\bf\Psi}_1(:,r_5+s_5:K_j)] \\
&\hat{\bf\Psi}_2^1=\hat{\bf\Psi}_2(:,c_{\rm {in}}+1:c_{\rm {in}}+s_5) \\
&\hat{\bf\Psi}_2^0=[\hat{\bf\Psi}_2(:,1:c_{\rm {in}}), \hat{\bf\Psi}_2(:,c_{\rm {in}}+s_5+1:K_a)]\textrm{,}
\end{align}
\end{subequations}
in which $c_{\rm {in}}=r_5+K_a-k_5$.
Since $\hat{\bf\Psi}_1$ and $\hat{\bf\Psi}_2$ are invertible matrices,
$\hat{\bf\Psi}_1^\prime=[\hat{\bf\Psi}_1^0,\hat{\bf\Psi}_1^1]$ and
$\hat{\bf\Psi}_2^\prime=[\hat{\bf\Psi}_2^0,\hat{\bf\Psi}_2^1]$ are also invertible matrices.
Applying (47) and (48), we have
\begin{subequations}
\begin{align}
h({\bf V},{\bf W})&=h({{\bf V}\hat{\bf\Psi}_2^\prime},{{\bf W}\hat{\bf\Psi}_1^\prime}) \\
& = {\rm {rank}}\{{\bf{H}}_1{\bf V}\hat{\bf\Psi}_2^0\}
-m({\bf V}\hat{\bf\Psi}_2^\prime,{\bf W}\hat{\bf\Psi}_1^\prime) \\
&\le {\rm {rank}}\{{\bf{H}}_1{\bf V}\hat{\bf\Psi}_2^0\}
-m({\bf V}\hat{\bf\Psi}_2^0,{\bf W}\hat{\bf\Psi}_1^\prime)\textrm{,}
\end{align}
\end{subequations}
in which (52b) can be justified with ${\rm{span( }}{{\bf{G}}_2}{\bf{W}}\hat{\bf\Psi}_1^\prime
{\rm{) }} \cap {\rm{span}}({{\bf{H}}_1}{\bf{V}}\hat{\bf\Psi}_2^\prime)
={\rm{span}}({{\bf{H}}_1}{\bf{V}}\hat{\bf\Psi}_2^1$).
Besides, (52c) comes from the fact that
$m({\bf V}\hat{\bf\Psi}_2^\prime,{\bf W}\hat{\bf\Psi}_1^\prime)
\ge m({\bf V}\hat{\bf\Psi}_2^0,{\bf W}\hat{\bf\Psi}_1^\prime)$.

With the \emph{GSVD Transform} of $(({\bf H}_2{\bf W}\hat{\bf\Psi}_1^\prime)^H,({\bf G}_1{\bf V}\hat{\bf\Psi}_2^0)^H)$,
we obtain unitary matrices $\breve{\bf\Psi}_1 \in {{\mathbb C} ^{K_j\times K_j}}$
and $\breve{\bf\Psi}_2 \in {{\mathbb C} ^{(K_a-s_5)\times (K_a-s_5)}}$,
non-negative diagonal matrices $\breve{\bf D}_1\in {{\mathbb C} ^{K_j\times k_6}}$
and $\breve{\bf D}_2\in {{\mathbb C} ^{(K_a-s_5)\times k_6}}$, and
a matrix $\breve{\bf X}\in {{\mathbb C} ^{N_e\times k_6}}$ with ${\rm{rank}}\{\breve{\bf X}\}=k_6$,
such that
\begin{subequations}
\begin{align}
& {\bf H}_2{\bf W}\hat{\bf\Psi}_1^\prime \breve{\bf\Psi}_1=\breve{\bf X}\breve{\bf D}_1^H \\
& {\bf G}_1{\bf V}\hat{\bf\Psi}_2^0 \breve{\bf\Psi}_2=\breve{\bf X}\breve{\bf D}_2^H\textrm{,}
\end{align}
\end{subequations}
where ${k_6} = \min \{ {K_a-s_5} + {K_j},{N_e}\} $,
$r_6=k_6-\min \{K_a-s_5,N_e\}$ and
${s_6} = \min \{ {K_j},{N_e}\}  + \min \{ {K_a-s_5},{N_e}\}  - k_6$.

Let
\begin{subequations}
\begin{align}
&\breve{\bf\Psi}_1^1=\breve{\bf\Psi}_1(:,1:r_6+s_6) \\
&\breve{\bf\Psi}_1^0=\breve{\bf\Psi}_1(:,r_6+s_6:K_j) \\
&\breve{\bf\Psi}_2^1=\breve{\bf\Psi}_2(:,1:\breve c_{\rm {in}}+s_6) \\
&\breve{\bf\Psi}_2^0=\breve{\bf\Psi}_2(:,\breve c_{\rm {in}}+s_6+1:K_a-s_5)\textrm{,}
\end{align}
\end{subequations}
in which $\breve c_{\rm {in}}=r_6+(K_a-s_5)-k_6$.
Since $\breve{\bf\Psi}_1$ and $\breve{\bf\Psi}_2$ are invertible matrices, so
$\breve{\bf\Psi}_1^\prime=[\breve{\bf\Psi}_1^0,\breve{\bf\Psi}_1^1]$ and
$\breve{\bf\Psi}_2^\prime=[\breve{\bf\Psi}_2^0,\breve{\bf\Psi}_2^1]$ are also invertible matrices.
Applying (47) and (48), we have
\begin{subequations}
\begin{align}
& {\rm {rank}}\{{\bf{H}}_1{\bf V}\hat{\bf\Psi}_2^0\}
-m({\bf V}\hat{\bf\Psi}_2^0,{\bf W}\hat{\bf\Psi}_1^\prime) \nonumber \\
&={\rm {rank}}\{{\bf{H}}_1{\bf V}\hat{\bf\Psi}_2^0\breve{\bf\Psi}_2^\prime\}
-m({\bf V}\hat{\bf\Psi}_2^0\breve{\bf\Psi}_2^\prime,{\bf W}\hat{\bf\Psi}_1^\prime\breve{\bf\Psi}_1^\prime) \\
& ={\rm {rank}}\{{\bf{H}}_1{\bf V}\hat{\bf\Psi}_2^0\breve{\bf\Psi}_2^\prime\}
-{\rm {rank}}\{\breve{\bf\Psi}_2^0\} \\
& \le {\rm {rank}}\{{\bf{H}}_1{\bf V}\hat{\bf\Psi}_2^0\breve{\bf\Psi}_2^1\}\textrm{,}
\end{align}
\end{subequations}
where due to ${\rm{span}}({\bf{G}}_1{\bf{V}}\hat{\bf\Psi}_2^0\breve{\bf\Psi}_2^\prime)
/{\rm{span}}({\bf{H}}_2{\bf{W}}\hat{\bf\Psi}_1^\prime\breve{\bf\Psi}_1^\prime)
={\rm{span}}({\bf{G}}_1{\bf{V}}\hat{\bf\Psi}_2^0\breve{\bf\Psi}_2^0)$, (55b) holds true.
Besides, (55c) holds true due to
${\rm {rank}}\{{\bf{H}}_1{\bf V}\hat{\bf\Psi}_2^0\breve{\bf\Psi}_2^\prime\}
\le {\rm {rank}}\{{\bf{H}}_1{\bf V}\hat{\bf\Psi}_2^0\breve{\bf\Psi}_2^1\}
+{\rm {rank}}\{{\bf{H}}_1{\bf V}\hat{\bf\Psi}_2^0\breve{\bf\Psi}_2^0\}$ and
${\rm {rank}}\{{\bf{H}}_1{\bf V}\hat{\bf\Psi}_2^0\breve{\bf\Psi}_2^0\}
\le {\rm {rank}}\{\breve{\bf\Psi}_2^0\}$.

Combining (52) with (55), we arrive at
\begin{align}
h({\bf V},{\bf W}) \le {\rm {rank}}\{{\bf{H}}_1{\bf V}\hat{\bf\Psi}_2^0\breve{\bf\Psi}_2^1\}.
\end{align}

With (53) and (54), we arrive at
$m({\bf V}\hat{\bf\Psi}_2^0\breve{\bf\Psi}_2^1,{\bf W}\hat{\bf\Psi}_1^\prime\breve{\bf\Psi}_1^\prime)=0$,
thus
\begin{align}
{\rm{span}}({\bf{G}}_1{\bf V}\hat{\bf\Psi}_2^0\breve{\bf\Psi}_2^1)
\subset {\rm{span}}({\bf{H}}_2{\bf W}\hat{\bf\Psi}_1^\prime\breve{\bf\Psi}_1^\prime).
\end{align}
In addition, with (50) and (51),
we get $n({\bf V}\hat{\bf\Psi}_2^0,{\bf W}\hat{\bf\Psi}_1^\prime)=0$.
So ${\rm{span}}( {\bf{G}}_2{\bf W}\hat{\bf\Psi}_1^\prime) \cap {\rm{span}}({\bf H}_1{\bf V}\hat{\bf\Psi}_2^0)=\{\bf 0\}$,
which, together with the facts that
${\rm{span}}( {\bf{G}}_2{\bf W}\hat{\bf\Psi}_1^\prime)={\rm{span}}( {\bf{G}}_2{\bf W}\hat{\bf\Psi}_1^\prime\breve{\bf\Psi}_1^\prime)$
and ${\rm{span}}({\bf H}_1{\bf V}\hat{\bf\Psi}_2^0) \supset {\rm{span}}({\bf H}_1{\bf V}\hat{\bf\Psi}_2^0\breve{\bf\Psi}_2^1)$, gives
\begin{align}
{\rm{span}}( {\bf{G}}_2{\bf W}\hat{\bf\Psi}_1^\prime\breve{\bf\Psi}_1^\prime)
\cap {\rm{span}}({\bf H}_1{\bf V}\hat{\bf\Psi}_2^0\breve{\bf\Psi}_2^1)=\{\bf 0\}.
\end{align}

Combining (57) with (58), we know
$\{{\bf V}\hat{\bf\Psi}_2^0\breve{\bf\Psi}_2^1,{\bf W}\hat{\bf\Psi}_1^\prime\breve{\bf\Psi}_1^\prime \}$
is a feasible point for the problem of (30). By definition,
${\rm {rank}}\{{\bf{H}}_1{\bf V}\hat{\bf\Psi}_2^0\breve{\bf\Psi}_2^1\} \le d$,
which, together with (56), indicates that $h({\bf V},{\bf W}) \le d$.

Because the above derivations hold true for any given point of $\{{\bf V},{\bf W}\}$,
we conclude that $s.d.o.f. \le d$.
This completes the proof.

%\section{Proof of the equality in (34)} \label{appF}
%Without lose of generality, assume that ${\bf A} \in \mathbb{C}^{M \times N}$, ${\bf B} \in \mathbb{C}^{N \times N}$,
%${\bf B}_0 \in \mathbb{C}^{N \times N_0}$ and ${\bf B} \in \mathbb{C}^{N \times N_1}$.
%According to \cite{Horn85}, we have
%\begin{align}
%{\rm {rank}}({\bf A}{\bf B}_0)+{\rm {rank}}({\bf A}{\bf B}_1) \ge {\rm {rank}}({\bf A}{\bf B})
%\mathop  = \limits^{(a)} {\rm {rank}}({\bf A})
%\end{align}
%where (a) comes from (48).
%\begin{align}
%{\rm {span}}({\bf A}{\bf B}_1) \cap {\rm {span}}({\bf A}{\bf B}_0)=\{\bf 0\}
%\end{align}

\section{Proof of Lemma 5} \label{appF}
Clearly, $d\ge d^\star$ holds true. So if we can further prove
$d\le d^\star$, then the proof of Lemma 5 is completed.
In the following text, we give the proof of $d\le d^\star$ by contradiction.
Assume that there exists a feasible point $\{\tilde{\bf V}, \tilde{\bf W}\}$ of (30),
where $\tilde{\bf V} \in \mathbb{C}^{N_a \times K_a}$, $d^\sharp \triangleq {\rm {rank}\{{\bf H}_1{\tilde {\bf V}}\}} $
and $K_a = d^\sharp>d^\star$. In such case, we have ${\rm {rank}\{{\tilde {\bf V}}\}}= d^\sharp$
due to $d^\sharp={\rm {rank}\{{\bf H}_1{\tilde {\bf V}}\}} \le {\rm {rank}\{{\tilde {\bf V}}\}} \le  K_a =d^\sharp$.
Besides, by definition, $d^\sharp \le \min\{N_a,N_b\}$ always holds true.
In the sequel, we discuss the four cases in Table I and give contradictions one by one.

In \emph{Case I} and \emph{Case II}, $d^\star = \min\{N_a,N_b\}$. The assumption
$d^\sharp >d^\star$ implies $d^\sharp >\min\{N_a,N_b\}$, which
contradicts the fact $d^\sharp \le \min\{N_a,N_b\}$.
%Thus, we should have $d^\sharp  \le d^\star$.

In \emph{Case III}, when $d^\star=\min\{N_a,N_b\}$, the assumption $d^\sharp >d^\star=N_b$
contradicts the fact $d^\sharp \le \min\{N_a,N_b\}$.
As such, we only need to focus on the case of $d^\star=d_0+d_1+d_2$, where
$d_0=(N_a-N_e)^+$, $d_1= s_3$
and $d_2=\min \{s_4,\lfloor \frac{N_b-(d_0+d_1)}{2} \rfloor\}$.
\begin{enumerate}
\item For the case of $s_4 \le\lfloor \frac{N_b-(d_0+d_1)}{2} \rfloor$, $d^\star=d_0+d_1+s_4$.
In addition, $d^\sharp >d^\star$. So $d^\sharp >d_0+d_1+s_4$.
Therefore $d^\sharp - d_0 >s_3+ s_4$, which contradicts (30b). The explanation is as follows.
According to (32),
%$s_3 =d_1,$ %{\rm {dim}} \{{\rm {span}}({\bf H}_2{\bf \Gamma})\cap {\rm {span}}({\bf G}_1{\bf V}_0^c)\}
$s_4 =\min \{N_j,N_e\}+\min \{N_a-d_0-d_1,N_e\}-\min \{N_j+N_a-d_0-d_1,N_e\}$.
With the \emph{GSVD Transform} of $({\bf H}_2^H,{\bf G}_1^H)$,
$s={\rm {dim}} \{{\rm {span}}({\bf H}_2)\cap {\rm {span}}({\bf G}_1)\}=
\min \{N_j,N_e\}+\min \{N_a,N_e\}-\min \{N_j+N_a,N_e\}$.
It is easy to verify that $s \le s_3+s_4$.
%When $N_a>N_e$, $s=\min \{N_j,N_e\}$ and $s_3+s_4=N_e+\min \{N_j,N_e\}-\min \{N_j+N_a-d_0-d_1,N_e\}$,
%so $s \le s_3+s_4$; when $N_a \le N_e$,
%$s=\min \{N_j,N_e\}+N_a-\min \{N_j+N_a,N_e\}$ and $s_3+s_4=N_a+\min \{N_j,N_e\}-\min \{N_j+N_a-d_1,N_e\}$,
%so $s \le s_3+s_4$.
In addition, to satisfy (30b) we should have ${\rm{rank}}\{{\bf{G}}_1\tilde{\bf{V}}\} \le s$.
Thus ${\rm{rank}}\{{\bf{G}}_1\tilde{\bf{V}}\}\le s_3+s_4$.
Moreover, ${\rm {rank}}\{\tilde{\bf{V}}\}-d_0 \le \min\{N_a, N_e\}$ due to the fact
${\rm {rank}}\{\tilde{\bf{V}}\}  \le N_a$, so
${\rm{rank}}\{{\bf{G}}_1\tilde{\bf{V}}\}={\rm {rank}}\{\tilde{\bf{V}}\}-d_0=d^\sharp - d_0 $.
Therefore, $d^\sharp - d_0  \le s_3+s_4$,
which gives the contradiction.
\item For the case of $s_4 >\lfloor \frac{N_b-(d_0+d_1)}{2} \rfloor$,
$d^\star=d_0+d_1+\lfloor \frac{N_b-(d_0+d_1)}{2} \rfloor$, which,
together with the assumption $d^\sharp >d^\star$, gives
\begin{align}
d^\sharp >d_0+d_1+\lfloor \frac{N_b-(d_0+d_1)}{2} \rfloor.
\end{align}
If $N_b-(d_0+d_1)$ is an even number, (59) is equivalent to
$2d^\sharp >N_b+d_0+d_1$. Otherwise, $N_b-(d_0+d_1)$ is an odd number, so (59) is equivalent to
$2d^\sharp >N_b+d_0+d_1-1$. In addition, $N_b+d_0+d_1$ owns the same parity
as $N_b-d_0-d_1$, thus
$N_b+d_0+d_1-1$ is an even number. Therefore $2d^\sharp >N_b+d_0+d_1$.
To sum up, (59) indicates $2d^\sharp >N_b+d_0+d_1$.
Thus $d^\sharp -d_0>N_b-d^\sharp+d_1$.
Moreover, to satisfy (30c), we should have $N_b-d^\sharp+d_1  \ge {\rm {rank}}\{\tilde{\bf{W}}\} $.
So $d^\sharp -d_0>{\rm {rank}}\{\tilde{\bf{W}}\}$.
However, ${\rm{rank}}\{{\bf{G}}_1\tilde{\bf{V}}\}={\rm {rank}}\{\tilde{\bf{V}}\}-d_0$
due to ${\rm {rank}}\{\tilde{\bf{V}}\}-d_0 \le \min \{N_a, N_e\}$.
%Thus ${\rm{rank}}\{{\bf{G}}_1\tilde{\bf{V}}\}=d^\sharp -d_0$,
%Therefore, ${\rm {rank}}\{\tilde{\bf{V}}\}-d_0=d^\sharp -d_0 > {\rm {rank}}\{\tilde{\bf{W}}\}$.
%Moreover,
Thus, ${\rm{rank}}\{{\bf{G}}_1\tilde{\bf{V}}\}=d^\sharp -d_0> {\rm {rank}}\{\tilde{\bf{W}}\}
\ge {\rm{rank}}\{{\bf{H}}_2\tilde{\bf{W}}\}$, which contradicts (30b).
\end{enumerate}

In \emph{Case IV}, $d^\star=d_0+d_2$, where $d_0=(N_a-N_e)^+$
and $d_2=\min \{s_4,\lfloor \frac{N_b-d_0}{2} \rfloor\}$.
Since the analysis is similar to \emph{Case III}, so in the sequel we only give the skeleton on it.
\begin{enumerate}
\item For the case of $s_4 \le\lfloor \frac{N_b-d_0}{2} \rfloor$, $d^\star=d_0+s_4$, which,
combined with the assumption $d^\sharp >d^\star$, gives $d^\sharp >d_0+s_4$.
Thus $d^\sharp - d_0 >s_4$. However, with the \emph{GSVD Transform} of $({\bf H}_2^H,{\bf G}_1^H)$,
$s={\rm {dim}} \{{\rm {span}}({\bf H}_2)\cap {\rm {span}}({\bf G}_1)\}=s_4$.
To satisfy (30b), we should have ${\rm{rank}}\{{\bf{G}}_1\tilde{\bf{V}}\} \le s =s_4$,
which, together with
the fact ${\rm {rank}}\{\tilde{\bf{V}}\}-d_0 \le \min \{N_a, N_e\}$, indicates $d^\sharp - d_0  \le s_4$.
\item For the case of $s_4 >\lfloor \frac{N_b-d_0}{2} \rfloor$,
$d^\star=d_0+\lfloor \frac{N_b-d_0}{2} \rfloor$, which,
together with the assumption $d^\sharp >d^\star$, gives
$2d^\sharp >N_b+d_0$.
Thus $d^\sharp -d_0>N_b-d^\sharp$.
However, to satisfy (30c), we should have $N_b-d^\sharp \ge {\rm {rank}}\{\tilde{\bf{W}}\} $.
Therefore, ${\rm {rank}}\{\tilde{\bf{V}}\}-d_0=d^\sharp -d_0 > {\rm {rank}}\{\tilde{\bf{W}}\}$,
which, together with the fact
${\rm{rank}}\{{\bf{G}}_1\tilde{\bf{V}}\}={\rm {rank}}\{\tilde{\bf{V}}\}-d_0 $, contradicts (30b).
\end{enumerate}

Summarizing the above four cases, for any feasible points for the problem of (30),
denoted by $\{\tilde{\bf V}, \tilde{\bf W}\}$
and $ \tilde{\bf V} \in \mathbb{C}^{N_a \times K_a}$,
if $K_a= d^\sharp$, $d^\sharp \le d^\star$.
On the other hand, if $K_a> d^\sharp$, resorting to the singular value decomposition
(SVD) of ${\bf H}_1{\tilde {\bf V}}$, we can always find
another feasible point $\{\tilde{\bf V}^\prime , \tilde{\bf W}^\prime\}$ for the problem of (30),
such that $ \tilde{\bf V}^\prime\in \mathbb{C}^{N_a \times K_a^\prime}$ and
$K_a^\prime={\rm {rank}\{{\tilde {\bf V}^\prime}\}}={\rm {rank}\{{\bf H}_1{\tilde {\bf V}^\prime}\}}=d^\sharp$.
As such, the assumption
$d^ \sharp>d^\star$ also contradicts the feasibility conditions in (30).
So $d^\sharp \le d^\star$.

In conclusion, for any feasible points for the problem of (30),
we should have $d^\sharp \le d^\star$.
By definition, we arrive at $d \le d^\star$.
This completes the proof.

\section{Proof of Corollary 2} \label{appG}
According to Table II, it is obvious that $s.d.o.f.=0$ can only happen
in the last case, where
\begin{align}
& N_a \le N _b+N_e-N_j, N_b < N_j  \le N_e +N_b-N_a \nonumber \\
\text{or} & \quad\quad \quad \quad   N_a \le N_e, N_j \le N_b.
\end{align}
Thus, to complete the proof, we only need to focus on the case of (60),
where $s.d.o.f.=\min \{s,\lfloor \frac {N_b}{2} \rfloor\}$
in which $s=\min\{N_a,N_e\}+\min\{N_j,N_e\}-\min\{N_a+N_j,N_e\}$.
Note that the case $N _b+N_e-N_j <N_a \le N_e$, $N_b < N_j \le N_e +N_b-N_a$
never happens, so (60) is equivalent to
\begin{align}
N_a \le N _e,  N_j \le N_e +N_b -N_a\textrm{,}
\end{align}
which implies $s=N_e+\min\{N_j,N_e\}-\min\{N_a+N_j,N_e\}$.
Therefore,
\begin{enumerate}
\item for the case of $N_e \ge N_a+N_j$, $s=0$ thus $s.d.o.f.=0$;
\item for the case of $N_e\le N_j$, $s=N_a$ thus $s.d.o.f.=\min \{N_a,\lfloor \frac {N_b}{2} \rfloor\}$;
\item for the case of $N_j< N_e < N_a+N_j$, $s=N_a+N_j-N_e$ thus
$s.d.o.f.=\min \{N_a+N_j-N_e,\lfloor \frac {N_b}{2} \rfloor\}$.
\end{enumerate}
%Clearly, if $N_b=1$, $s.d.o.f.=0$ for all the above three cases.
%Otherwise, $s.d.o.f.=0$ if and only if $N_e \ge N_a+N_j$.
In conclusion,
when $N_b=1$, $s.d.o.f.=0$ if and only if $N_e \ge N_a +N_j -1$;
when $N_b>1$, $s.d.o.f.=0$ if and only if $N_e \ge N _a+N_j$.
This completes the proof.

\bibliography{mybib}
\bibliographystyle{IEEEtran}

\end{document}